\newcommand{\avg}[1]{\left< #1 \right>} 
\newtheorem{theorem}{Theorem}
\begin{document}




\begin{flushright}
	LU TP 19-08\\
	\today
\end{flushright}

\title{Anomaly-free Model Building with Algebraic Geometry}

\author{Johan Rathsman}
\email{johan.rathsman@thep.lu.se}
\author{Felix Tellander}
\email{felix@tellander.se}
\affiliation{Department of Astronomy and Theoretical Physics, Lund University,\\ SE-223 62, Lund, Sweden}



\date{\today}

\begin{abstract}
	We present a method to find anomaly-free gauged Froggatt-Nielsen type models using results from algebraic geometry. These methods should be of general interest for model building beyond the Standard Model (SM) when rational charges are required. We consider models with a gauged $U(1)$ flavor symmetry with one flavon and two Higgs doublets and three right-handed SM singlets to provide three model examples based on different physical assumptions. The models we study are: anomaly-free with no SM neutral heavy chiral fermions, anomaly-free with SM neutral heavy chiral fermions, and supersymmetric with SM neutral heavy chiral fermions where the anomalies cancel via the Green-Schwarz mechanism. With these different models we show how algebraic methods may be used in model building; both to reduce the charge constraints by calculation of Gröbner bases, and to find rational solutions to cubic equations using Mordell-Weil generators.
	
	Using these tools we find three phenomenologically viable models explaining the observed flavor structure.
\end{abstract}

\keywords{Algebraic geometry, flavor symmetry, 2HDM}

\maketitle

\section{\label{introduction}Introduction}
The mass spectrum of fermions spans at least eleven orders of magnitude (from the top quark to the neutrinos) and if all these masses are to be generated by the Higgs mechanism, the couplings to the Higgs field must span an equal range. However, neutrino masses are often assumed to be generated from a seesaw mechanism \cite{MINKOWSKI1977,Yanagida:1979,Mohapatra1980,GellMann:1980}, or more generally, a Weinberg operator \cite{Weinberg1979}. If we adopt this explanation, the masses of the charged fermions masses still span six orders of magnitude. A famous explanation for this is the Froggatt-Nielsen (FN) mechanism \cite{Froggatt1979}. This provides an appealing explanation in terms of suppression factors $(\avg{S}/\Lambda_{FN})^n$, where $\Lambda_{FN}$ is the scale of integrated out physics, $\avg{S}$ the vacuum expectation value of the ``flavon" which breaks a new $U(1)$ gauge\footnote{The original FN-mechanism assumes that the symmetry is a global one, but here we will assume that it is local.} symmetry and $n$ depends on the charges of the fields under this new symmetry. A similar idea was also developed independently by Bijnens and Wetterich in \cite{Bijnens1987} but with heavy scalar fields instead of the heavy fermions used in \cite{Froggatt1979}. Throughout the paper we will use the term FN-mechanism independent of the origin of the suppression factors. In addition to the FN-mechanism, we will consider a two Higgs doublet model (2HDM), see \cite{Lee1973} and the review \cite{Branco2011}, where both Higgs fields are in general charged under the new $U(1)$ group. 

The main objective of this paper is to show how algebraic and Diophantine geometry provides powerful tools for finding rational flavon charges of the fermions and Higgs fields under this new $U(1)$ symmetry. 
The Standard Model gauge group is thus extended to $SU(3)_C\times SU(2)_L\times U(1)_Y\times U(1)'$ where $U(1)'$ denotes the new flavor dependent symmetry. 
Since the $U(1)'$ symmetry is local the
flavon charges have to cancel the triangle anomalies \cite{Bell1969,Adler1969a,Bardeen1969,Adler1969,Gross1969} of this gauge group, in addition to providing phenomenologically viable suppression factors for the fermion masses and reproducing the mixing matrices for the fermions. On top of this, we also want the flavon charges to be rational. Demanding rational charges is a significant challenge, since this then becomes related to Hilbert's 10th problem \cite{Hilbert1902} which is known to have no general solution \cite{Matiyasevich1973}. We proceed in this manner with a simple motivation. Since the two $U(1)$ gauge groups we know exist: $U(1)_Y$ and $U(1)_{EM}$, have rationally quantized charges, it is natural to assume that any new $U(1)$ should behave similarly. 


One of the best motivations for a 2HDM as an extension of the SM is its occurrence in supersymmetry (SUSY). However, vanishing anomalies and the Froggatt-Nielsen mechanism is contradictory in the SUSY setting \cite{IBANEZ1994,BINETRUY1995}. Therefore we instead
invoke the Green-Schwarz mechanism \cite{Green1984}, which is a string theoretic completion, to deal with anomaly cancellation in the case of SUSY. 

We have constructed a series of model examples to show how the algebraic methods may be used. These models have different mechanisms for anomaly cancellation and different phenomenological constraints imposed (we always demand recreation of the charged fermion masses and the CKM matrix):
\begin{itemize}
	\item (Sections \ref{model example} and \ref{sec: anomaly-free}) Here we study a 2HDM with three right-handed SM-neutral fermions where all anomalies vanish and neutrino masses are generated via the Weinberg operator.  
	\item  (Section \ref{sec: anomalous}) 2HDM particle content with three right-handed neutrinos where all anomaly coefficients vanish by the stated particle content, except the $U(1)'-U(1)'-U(1)'$ and graviton-graviton-$U(1)'$ anomalies, which are assumed to vanish by SM-neutral fermion content. The neutrino masses are generated by a type-I seesaw mechanism.
	\item (Section \ref{sec: supersymmetry}) Minimal supersymmetry with three right-handed neutrinos where the anomalies cancel via the Green-Schwarz mechanism and the neutrino masses are again generated by a type-I seesaw mechanism.
\end{itemize}

The paper is organized as follows. In Section \ref{sec: Froggatt-Nielsen} we review the Froggatt-Nielsen mechanism and derive the constraints from anomaly cancellation and flavor phenomenology. Next, in Section \ref{model example} we introduce the first model example and show how algebraic geometry naturally enters. Some aspects of algebraic geometry is discussed in Section \ref{sec: algebraic geometry} and the model example is then continued in Section \ref{sec: anomaly-free}. The two other model examples are given in Section \ref{sec: anomalous} and \ref{sec: supersymmetry}. Finally Section \ref{sec: conclusion} concludes the paper.

\section{Gauged Froggatt-Nielsen Mechanism in 2HDMs}\label{sec: Froggatt-Nielsen}
Let the SM gauge group be extended with a flavor dependent $U(1)$ symmetry denoted $U(1)'$. We assume that all fermions and both Higgs fields are charged under this new symmetry and call this charge flavon charge. Moreover, we assume that this symmetry is spontaneously broken when a complex scalar $S$, the flavon, with flavon charge $-1$, gets a vacuum expectation value (VEV).
Just above the energy scale $\avg{S}$, it is assumed that there exists many heavy vector-like fermion singlets, called FN-fermions, with mass $\sim\Lambda_{FN}$. 
At energies above $\avg{S}$ the observed fermions are effectively massless and the Yukawa couplings we observe in experiments are determined by physics at this scale where 
the heavy FN-fermions get their mass via a Higgs mechanism with a neutral scalar $\Phi'$. The different flavor properties of the fermions at the electroweak scale is encoded in different powers of the symmetry breaking parameter $\epsilon=\avg{S}/\Lambda_{FN}\approx 0.2$, which, following Froggatt and Nielsen \cite{Froggatt1979}, is chosen to fit the Wolfenstein parameterization \cite{Wolfenstein1983} of the CKM matrix. The powers of $\epsilon$ is then given by the number of flavon insertions needed for $U(1)'$ invariance. 

Assuming a 2HDM, the left-handed fermion fields we have are: $\{Q_L^i,(U_R^i)^c,(D_R^i)^c,L_L^i,(E_R^i)^c\}$ where $i=1,2,3$ is the flavor index and $(\cdot)^c$ denotes charge conjugation. In addition we also have the two Higgs fields $\{\Phi_1,\Phi_2\}$. We denote the flavon charges of these fields by $Q_i,u_i,d_i,L_i,e_i$ and $H_{1,2}$ respectively. Let us already here note that we will discuss physics at two different scales; the electroweak scale and the large $\Lambda_{FN}$ scale. In general these scales could be many orders of magnitude apart and one should therefore compare the physics at these scales using renormalization group evolution. This is, however, beyond the scope of the current paper. In any case, we do not expect large effects from this since the number of flavon insertions depends logarithmically on the masses.

The Yukawa Lagrangian in a general 2HDM is given by
\begin{equation}\label{eq: Lagrangian Yukawa 2HDM}
\begin{array}{ll}
-\mathcal{L}_Y=&\overline{Q}_L\widetilde{\Phi}_1Y_1^UU_R+\overline{Q}_L\Phi_1Y_1^DD_R+\overline{L}_L\Phi_1Y_1^LE_R\\
&+\overline{Q}_L\widetilde{\Phi}_2Y_2^UU_R+\overline{Q}_L\Phi_2Y_2^DD_R+\overline{L}_L\Phi_2Y_2^LE_R+\mathrm{H.c.}
\end{array}
\end{equation}
where $\widetilde{\Phi}=i\sigma_2\Phi^*$. As this Lagrangian stands, it is difficult to implement the Froggatt-Nielsen mechanism since we do not know a priori which of the Higgs fields provides the dominating mass contribution to each fermion. To circumvent this, and to remove flavor changing neutral currents (FCNCs) at tree-level, we impose a $\mathbb{Z}_2$-symmetry \cite{Glashow1977}. As is well known, there are four different ``Types" of $\mathbb{Z}_2$-symmetry as given in Table ~\ref{table: Z2} with the corresponding
$\mathbb{Z}_2$ charges.\footnote{In the models with right handed  neutrinos $N_R$, we assume that they have the same charges as $U_R$. In principle there are four more types for these models; the ones where $N_R$ has the opposite charge.}
\begin{table}
	\centering
	\caption{Different types of $\mathbb{Z}_2$ charge assignments for a 2HDM, the left-handed doublets $Q_L$ and $L_L$ are assigned ``+" in all cases.}
	\label{table: Z2}
	\begin{tabular}{lccccc}
		\hline
		$\mathbb{Z}_2$-symmetry &\quad $\Phi_1$ &\quad $\Phi_2$ &\quad $U_R$ &\quad $D_R$ &\quad $E_R$\\
		\hline
		Type-I (SM like) &\quad $+$ &\quad $-$ &\quad $-$ &\quad $-$ &\quad $-$\\
		Type-II (MSSM like) &\quad $+$ &\quad $-$ &\quad $-$ &\quad $+$ &\quad $+$\\
		Type-III/Y (flipped) &\quad $+$ &\quad $-$ &\quad $-$ &\quad $+$ &\quad $-$\\
		Type-IV/X (lepton specific) &\quad $+$ &\quad $-$ &\quad $-$ &\quad $-$ &\quad $+$\\
		\hline
	\end{tabular}
\end{table}

When the physics at the $\Lambda_{FN}$ scale is integrated out, the Yukawa couplings at the electroweak scale may be expressed as:
\begin{align}\label{eq: FN-mechanism Yukawa}
(Y_a^U)_{ij}\overline{Q}_L^i\widetilde{\Phi}_aU_R^j&\longrightarrow (g_a^U)_{ij}\left(\frac{\avg{S}}{\Lambda_{FN}}\right)^{|Q_i+u_j+H_a|}\overline{Q}_L^i\widetilde{\Phi}_aU_R^j\nonumber\\
(Y_a^D)_{ij}\overline{Q}_L^i\Phi_aD_R^j&\longrightarrow (g_a^D)_{ij}\left(\frac{\avg{S}}{\Lambda_{FN}}\right)^{|Q_i+d_j-H_a|}\overline{Q}_L^i\Phi_aD_R^j\\
(Y_a^L)_{ij}\overline{L}_L^i\Phi_aE_R^j&\longrightarrow (g_a^L)_{ij}\left(\frac{\avg{S}}{\Lambda_{FN}}\right)^{|L_i+e_j-H_a|}\overline{L}_L^i\Phi_aE_R^j\nonumber
\end{align}
where the $(g_a^F)_{ij}$ couplings are assumed to be $\sim\mathcal{O}(1)$ as in \cite{Froggatt1979}, with $F=U,D,L$ and $a=1,2$. The moduli in the exponents reflect the fact that we may choose either $S$ or $S^*$ to balance the flavon charges of the operators.

From the above structure of the Yukawa matrices, one must extract the masses and mixings. This is as usual done via bi-unitary transformations. Let us begin with the quark sector, assume that to each of the Yukawa matrices there is only one Higgs field providing the dominant mass contribution. Then the Yukawa matrices may be written as $Y^U_{ij}=g^U_{ij}\epsilon^{|Q_i+u_j+H_a|}$ and $Y^D_{ij}=g^D_{ij}\epsilon^{|Q_i+d_j-H_b|}$ where $a,b\in\{1,2\}$ are fixed. These matrices may now be written as
\begin{equation}
\begin{array}{rl}
Y^U&=(V_L^U)^\dagger D^UV_R^U\\
Y^D&=(V_L^D)^\dagger D^DV_R^D
\end{array}
\end{equation}
where $D^F,\ F=U,D$ are diagonal matrices. The philosophy of the FN-mechanism is that the magnitudes of the masses and mixings should solely depend on the $\epsilon$-parameters and thus one can take all the pre-factors $g$ to be of order one. For this to work one assumes that all the exponents in the Yukawa couplings are ordered\footnote{This is a crucial point and leads to that type-II 2HDM are preferred as shown in Section \ref{sec: sum rules}} such that 
\begin{equation}
\begin{array}{ll}
|Q_i+u_j+H_a|\ge|Q_{i+1}+u_j+H_a|,\ \ \ &|Q_i+u_i+H_a|\ge|Q_{i+1}+u_{i+1}+H_a|,\\
|Q_i+d_j-H_a|\ge|Q_{i+1}+d_j-H_a|,\ \ \ &|Q_i+d_i-H_a|\ge|Q_{i+1}+d_{i+1}-H_a|.
\end{array}
\end{equation}
Under these assumptions 
it is possible to diagonalize the Yukawa matrices analytically to leading order in $\epsilon$,
as shown in \cite{Froggatt1979}, giving:
\begin{equation}
\begin{array}{cc}
(V_L^U)_{ij}\sim\epsilon^{|Q_i-Q_j|},&\qquad (V_R^U)_{ij}\sim\epsilon^{|u_i-u_j|} \\
(V_L^D)_{ij}\sim\epsilon^{|Q_i-Q_j|},&\qquad (V_R^D)_{ij}\sim\epsilon^{|d_i-d_j|} 
\end{array}
\end{equation}
and the diagonal elements of the mass matrices are then given by
\begin{equation}
\begin{array}{c}
(D^U)_{ii}\sim\epsilon^{|Q_i+u_i+H_a|}\\
(D^D)_{ii}\sim\epsilon^{|Q_i+d_i-H_a|}
\end{array}
\end{equation}
{\it i.e.}~ the diagonal entries of $Y$. 
It then follows that the CKM-matrix is given by
\begin{equation}
(V_{CKM})_{ij}=(V_L^U)_{ik}(V_L^{D\ \dagger})_{kj}\sim\epsilon^{|Q_i-Q_j|} ,
\end{equation}
where we note that the mixing is to leading order determined by the flavon charges of the doublets.
For definiteness we will later assume without loss of generality that these charges are ordered, $Q_i\ge Q_{i+1}$.

A completely analogous calculation may be performed in the lepton sector once the mass matrix for the neutrinos is specified
yielding then also the mixing matrix for neutrinos, the so called PMNS-matrix.

\subsection{Neutrino masses}
In this paper we consider two ways of generating neutrino masses: directly via the Weinberg operator allowed by the FN-mechanism or via a type-I seesaw mechanism. Of course, when the right-handed fields in a type-I seesaw model are integrated out one obtains a Weinberg operator, but we still have to  distinguish between these two cases. To complete the symmetries between quarks and leptons, the neutrinos should have right-handed chiral partners. If the neutrino masses are generated by a Weinberg operator created solely by the FN-mechanism, then the right-handed fields have nothing to do with the mass generation so they only contribute to anomaly cancellation. Imposing a type-I seesaw is more restrictive; not only must the Yukawa couplings now be made gauge invariant, but the right-handed neutrinos must also effectively have Majorana masses. We will describe this in detail below.

Let us start with the case when the Weinberg operator is generated directly from the FN-mechanism. To the Lagrangian in Eq.~(\ref{eq: Lagrangian Yukawa 2HDM}) we must then add terms of the form
\begin{equation}\label{eq: Weinberg operator}
-\mathcal{L}_\nu^{(5)}=\frac{1}{2}\frac{(\kappa_{ab})_{ij}}{\Lambda_{FN}}\left(\widetilde{\Phi}_a^\dagger \overline{L_L^{c}}^j\right)\left(\widetilde{\Phi}_b^\dagger L_L^i\right)+\mathrm{H.c.}
\end{equation}
where $a,b\in\{1,2\}$. Imposing a $\mathbb{Z}_2$ symmetry restricts this term to $a=b$, but both Higgs fields may still contribute.

To generate this operator via the FN-mechanism the flavon charge in each of the two parenthesis must be an integer\footnote{It is sometimes stated in the literature that it is enough for them to be half-integers \cite{Appelquist2003}, but in a UV completion with vector-like fermions the chiralities will not add up unless the flavon charge is an integer.} so that the middle transition in the generating diagram, labeled $\chi$ in Fig. \ref{fig: Weinberg operator FN}, is made by an {\emph{uncharged}} Majorana fermion. The couplings at the electroweak scale may now be expressed as
\begin{equation}
(\kappa_{aa})_{ij}\longrightarrow (\kappa_{aa}^{\nu})_{ij}\left(\frac{\avg{S}}{\Lambda_{FN}}\right)^{|L_i+H_a|+|L_j+H_a|}
\end{equation}
where $a=1,2$ (i.e. we assume a $\mathbb{Z}_2$-symmetry), both the moduli have to be integers and $(\kappa_{aa}^\nu)_{ij}\sim\mathcal{O}(1)$ in the FN-spirit.

We assume here that the Majorana fermion $\chi$ is one of the FN fermions so that it also has a mass $\sim\Lambda_{FN}$. For this operator to not only account for the hierarchies but also the overall smallness of the neutrino masses, $\Lambda_{FN}$ has to be of the order $10^{14}$ GeV. Otherwise the $L_i$ and $H_a$ charges have to be increased accordingly.

The flavon VEV, $\avg{S}$, must be of the same order as $\Lambda_{FN}$ and thus the mass of the $Z'$ boson associated with $U(1)'$ must also be very large (if it is not extremely weakly coupled). This is readily seen from the relation
\begin{equation}
m_{Z'}\approx g_{Z'}\avg{S}
\end{equation}
where $g_{Z'}$ is the gauge coupling of the $Z'$. For this type of model, $Z'$ phenomenology is therefore not interesting, either the $Z'$ boson is so massive that its effects are unobservable, or if its mass scale is reachable by todays experiments, it has to be so weakly coupled that its effects would still be unobservable.   
\begin{figure}
	\centering
	\includegraphics[width=0.8\textwidth]{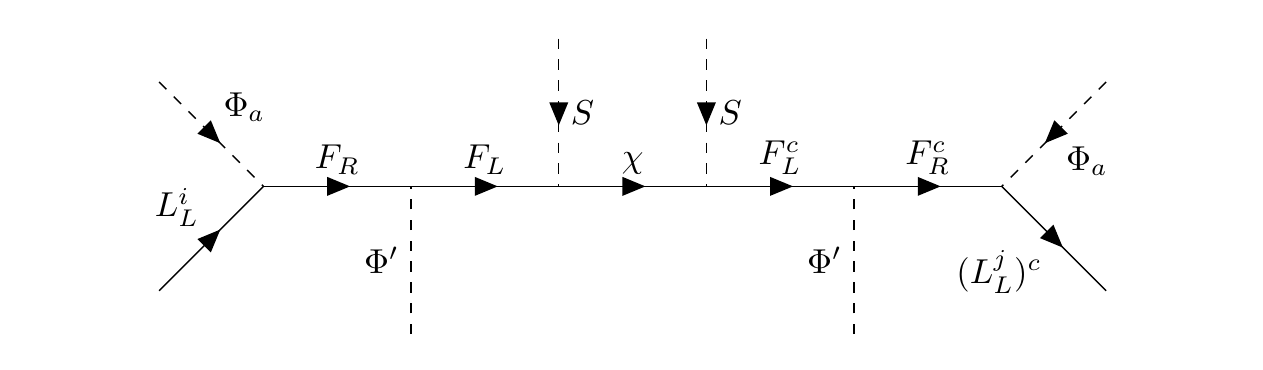}
	\caption{A diagram generating the $ij$-element of the Weinberg operator, $\chi$ is a Majorana fermion with mass $\sim\Lambda_{FN}$. Here the four-component spinor Feynman rules from ref.~\cite{Dreiner2010} are used.}
	\label{fig: Weinberg operator FN}
\end{figure}

If we instead want to use a type-I seesaw mechanism to generate the neutrino masses, we have to introduce the three $SU(2)_L$ singlet fields $N_R^i$, $i=1,2,3$, where we denote the flavon charge of the left-handed field $(N_R^i)^c$ by $\nu_i$. To the Lagrangian in Eq.~(\ref{eq: Lagrangian Yukawa 2HDM}) we must then add the terms
\begin{equation}\label{eq: L_N}
-\mathcal{L}_N=\overline{L}_L\widetilde{\Phi}_1Y_1^N N_R+\overline{L}_L\widetilde{\Phi}_2Y_2^N N_R+\frac{1}{2}M_R\overline{N_R^{c}}N_R+\mathrm{H.c.}
\end{equation}  
and if a $\mathbb{Z}_2$-symmetry is imposed it will only be one of the Yukawa terms that generates Dirac masses. 

The FN-mechanism for the Yukawa terms works the same way as for the terms in Eq.~(\ref{eq: FN-mechanism Yukawa}), so we have
\begin{equation}
(Y_a^N)_{ij}\longrightarrow (g_a^N)_{ij}\left(\frac{\avg{S}}{\Lambda_{FN}}\right)^{|L_i+\nu_j+H_a|}
\end{equation}
with $(g_a^N)_{ij}\sim\mathcal{O}(1)$. The Majorana masses for the right-handed fields may also be generated by the FN-mechanism:
\begin{equation}
\frac{1}{2}(M_R)_{ij}\overline{N_R^{c}}^iN_R^j\longrightarrow\frac{1}{2}\Lambda_{FN}(g^R)_{ij}\left(\frac{\avg{S}}{\Lambda_{FN}}\right)^{|\nu_i|+|\nu_j|}\overline{N_R^{c}}^iN_R^j
\end{equation}
where $(g^R)_{ij}\sim\mathcal{O}(1)$ and both $|\nu_i|$ and $|\nu_j|$ have to be integers so that a diagram similar to Fig. \ref{fig: Weinberg operator FN} may be drawn with an uncharged Majorana fermion doing the transition in the middle of the diagram. 

With the Dirac masses given by $m_D=(v_a/\sqrt{2})Y_a^N$ and the Majorana masses $M_R$ as just discussed, the light physical neutrino masses are given by (assuming $m_D \ll M_R$)
\begin{equation}
m_\nu=-m_D(M_R)^{-1}m_D^T.
\end{equation}
Since $M_R\sim\Lambda_{FN}$, we have $m_\nu\sim v^2/\Lambda_{FN}$, so that, just as in the case with the Weinberg operator, $\Lambda_{FN}$ must be of order $10^{14}$ GeV to account for the smallness of the neutrino masses (unless $|\nu_i|+|\nu_j|\sim 20$).

\subsection{Anomaly cancellation}
An important aspect of a gauged Froggatt-Nielsen mechanism is that the flavon charges not only have to fit with the phenomenological constraints, but also have to satisfy anomaly constraints. For the gauge group $SU(3)_C\times SU(2)_L\times U(1)_Y\times U(1)'$ together with gravity, there are six triangle diagrams whose contributions do not cancel trivially. In the following, let $\mathcal{A}_{XYZ}=\frac{1}{2}\mathrm{tr}[T_X\{T_Y,T_Z\}]$ where $T_X$ are the generators of the gauge group $X$ in the fundamental representation. For hypercharge we adopt the normalization that $Y=2(Q-T_3)$. The six anomaly constraints involving the $U(1)'$-charges are then given by
\begin{equation}\label{eq: anomalies U1}
\begin{array}{ll}
\mathcal{A}_{11'1'}&=2 {\displaystyle \sum_{j=1}^3}\left(Q_j^{2}-2u_j^{2}+d_j^{2}-L_j^{2}+e_j^{2}\right)=0\\
\mathcal{A}_{111'}&=\dfrac{2}{3} {\displaystyle\sum_{j=1}^{3}}\left(Q_j+8u_j+2d_j+3L_j+6e_j\right)=0\\
\mathcal{A}_{331'}&=\dfrac{1}{2}{\displaystyle\sum_{j=1}^3}\left(2Q_j+u_j+d_j\right)=0\\
\mathcal{A}_{221'}&=\dfrac{1}{2}{\displaystyle\sum_{j=1}^3}\left(3Q_j+L_j\right)=0\\
\mathcal{A}_{1'1'1'}&= {\displaystyle\sum_{j=1}^3}\left(6Q_j^{3}+3u_j^{3}+3d_j^{3}+2L_j^{3}+e_j^{3}+\nu_j^3\right)=0\\
\mathcal{A}_{gg1'}&={\displaystyle\sum_{j=1}^3}\left(6Q_j+3u_j+3d_j+2L_j+e_j+\nu_j\right)=0
\end{array}
\end{equation}
where $\mathcal{A}_{gg1'}$ is from the triangle diagram with two gravitons and one $U(1)'$ boson. Note that the gravitational anomaly may be written as $\mathcal{A}_{gg1'}=6\mathcal{A}_{331'}+\sum_{j=1}^{3}(2L_j+e_j+\nu_j)$ so when implemented later we only need to care about the leptonic part.
\subsection{Sum rules for FN-constraints}\label{sec: sum rules}
In this section we will derive a set of sum rules that show how the Froggatt-Nielsen constraints for each imposed $\mathbb{Z}_2$-symmetry are related to the anomaly constraints. If these rules are not satisfied, there will not exist an anomaly-free charge assignment satisfying the imposed FN-constraints. This generalizes some of the results in \cite{IBANEZ1994,BINETRUY1995} where SUSY was considered to more general 2HDMs. In addition, we show that these rules imply that the type-II symmetry is favored by the FN-mechanism since the other symmetries will lead to skewed Yukawa matrices with large off-diagonal elements. This is problematic since in the FN-mechanism, it is assumed that the diagonal elements in the Yukawa matrices directly gives the masses and large off-diagonal elements will spoil the diagonalization such that this is no longer the case.

Let us start with a type-II symmetry and denote by $\{n_u,n_c,n_t,n_d,n_s,n_b,n_e,n_\mu,n_\tau\}$ the signed number of $\epsilon$-factors suppressing the masses, e.g. the up quark mass is suppressed by $\epsilon^{|n_u|}$. With a type-II symmetry, we know that $n_u=Q_1+u_1+H_2$ and so on. Now, using the two sets of fermions that couple to the same Higgs field, which in this case are the down-type quarks and $e, \mu, \tau$-leptons, we obtain the following sum rule:
\begin{equation}\label{eq: sum rule type-II}
n_d+n_s+n_b-n_e-n_\mu-n_\tau=\sum_{j=1}^3(Q_j+d_j-L_j-e_j)=\frac{8}{3}\mathcal{A}_{331'}-\frac{1}{4}\mathcal{A}_{111'}-\mathcal{A}_{221'}=0.
\end{equation}

Similarly in the type-Y (flipped) case, using the fermions that couple to the same Higgs field, i.e. the up-type quarks and $e, \mu, \tau$-leptons, we obtain the following rule:
\begin{equation}\label{eq: sum rule type-III}
n_u+n_c+n_t+n_e+n_\mu+n_\tau=\sum_{j=1}^3(Q_j+u_j+L_j+e_j)=-\frac{2}{3}\mathcal{A}_{331'}+\frac{1}{4}\mathcal{A}_{111'}+\mathcal{A}_{221'}=0.
\end{equation} 

With type-X (lepton specific) symmetry it is the two sets of quarks that couple to the same Higgs field, this yields the rule:
\begin{equation}\label{eq: sum rule type-IV}
n_u+n_c+n_t+n_d+n_s+n_b=\sum_{j=1}^3(2Q_j+u_j+d_j)=2\mathcal{A}_{331'}=0.
\end{equation}
Finally, for type-I (SM-like) symmetry, all three rules: Eqs.~(\ref{eq: sum rule type-II}-\ref{eq: sum rule type-IV}), have to be satisfied. Two of the constraints above imply the third, so in practice, the SM-like 2HDM only gets two constraints from the sum rules and not three.

The sum rules that do not have to be satisfied for a given $\mathbb{Z}_2$ symmetry still affect the flavon charges since they specify the charges of the Higgs fields. For example, given a type-II model, Eq. (\ref{eq: sum rule type-IV}) gives
\begin{equation}
n_u+n_c+n_t+n_e+n_\mu+n_\tau=2\mathcal{A}_{331'}+3(H_2-H_1)=3(H_2-H_1)\ \in\mathbb{Z}
\end{equation} 
where $\mathbb{Z}$ denotes the integers.  
The same constraint is of course obtained from Eq. (\ref{eq: sum rule type-III}). This means that $H_2-H_1$ is specified by the suppression factors. 
In addition we see that, $H_2-H_1 \in\mathbb{Z}/3$ so that this difference may be an integer depending on the suppression factors.

As promised above, we will now argue that these sum rules imply that type-II symmetry is favored by the FN-mechanism. The reason is simple, it is all due to the minus signs on the left hand side in Eq.~(\ref{eq: sum rule type-II}) between the down-quarks and the $e, \mu, \tau$-leptons. These minus signs allow for $n_d,n_s,n_b,n_e,n_\mu,n_\tau>0$ and still satisfying the sum rule, while for the other rules, at least one of the $n$'s has to be smaller than zero. When this happens, the Yukawa matrices are prone to be skewed with 
off-diagonal elements breaking the $\epsilon$-ordering assumed when diagonalizing the mass-matrices 
(more specifically $|Q_i+u_j+H_a|\ge|Q_{i+1}+u_j+H_a|$ is broken if $Q_i+u_j+H_a < 0$ for ordered $Q_i$). 
As a consequence,
the fermion masses will no longer correspond to the diagonal elements. In other words, the idea behind the FN-mechanism does not apply and we therefore consider these situations disfavored. In what follows, we will therefore always assume a type-II symmetry when we impose the FN-constraints. 
For simplicity we will also assume that $\tan\beta=1$. Other values of $\tan\beta$ can easily be incorporated by reducing  $n_d,n_s,n_b,n_e,n_\mu,n_\tau$ accordingly.

Adding neutrino Yukawa couplings provide additional sum rules, but the conclusion above is unaffected by this.

\section{Anomaly-free Model example}\label{model example}
We now consider a type-II 2HDM with $\tan\beta=1$ and neutrino masses generated via the Weinberg operator.
As already mentioned, we assume three right-handed SM-neutral fermions in addition to the normal 2HDM particle content and they are assumed to have flavon charges $\nu_i$.
The FN-constraints we impose are:
\begin{equation}\label{eq: Yukawa constraints}
Y_2^U\sim\begin{pmatrix}
\epsilon^7 & \epsilon^5 & \epsilon^3\\
\epsilon^6 & \epsilon^4 & \epsilon^2\\
\epsilon^4 & \epsilon^2 & \epsilon^0
\end{pmatrix},\ \ \ Y_1^D\sim\begin{pmatrix}
\epsilon^7 & \epsilon^6 & \epsilon^6\\
\epsilon^6 & \epsilon^5 & \epsilon^5\\
\epsilon^4 & \epsilon^3 & \epsilon^3
\end{pmatrix},\ \ \ Y_1^L\sim\begin{pmatrix}
\epsilon^8 & * & *\\
* & \epsilon^4 & \epsilon^3\\
* & \epsilon^4 & \epsilon^3
\end{pmatrix}
\end{equation}
for the Yukawa matrices and
\begin{equation}\label{eq: Weinberg constraints}
\kappa_{11}\sim\begin{pmatrix}
* & * & *\\
* & \epsilon^0 & \epsilon^0\\
* & \epsilon^0 & \epsilon^0
\end{pmatrix}
\end{equation}
for the Weinberg operator generated with the $\Phi_1$-field\footnote{In principle one could also include a Weinberg operator with  $\Phi_2$, but for this case there exist no rational solutions for the flavon charges. Similarly it does not exist any rational solutions if we try to include Dirac or Majorana masses for the right-handed SM-neutral fermions.}
where * denotes an element we do not determine a priori. 
The $\epsilon$-suppression of the masses may be read off from the diagonal and the off-diagonal elements of the quark Yukawa matrices guarantees a CKM-matrix on the form
\begin{equation}
V_{CKM}\sim\begin{pmatrix}
1 & \epsilon & \epsilon^3\\
\epsilon & 1 & \epsilon^2\\
\epsilon^3 & \epsilon^2 & 1
\end{pmatrix}
\end{equation}
while the off-diagonal elements in the lepton Yukawa matrix and the Weinberg operator guarantees large $\nu_\mu-\nu_\tau$ mixing (assuming normal neutrino mass hierarchy). When writing down the constraints for $Y_1^D$ and $Y_1^L$ we have made sure that the sum rule for type-II models, Eq.~(\ref{eq: sum rule type-II}), is satisfied.

Even though it is not necessary for this model, since we have either an extremely massive or weakly coupled $Z'$, we may remove mixing between $U(1)_Y$ and $U(1)'$ in the massless limit by adding
\begin{equation}\label{eq: kinetic mixing}
\sum_{j=1}^3 (2Q_j-4u_j+2d_j-2L_j+2e_j)=0
\end{equation}
to the list of constraints. This is just the trace of the hyper charge and flavon charge generators.

All the phenomenological constraints: Eqs.~(\ref{eq: Yukawa constraints}), (\ref{eq: Weinberg constraints}) and (\ref{eq: kinetic mixing}), and the anomaly conditions, Eq.~(\ref{eq: anomalies U1}), are summarized in the following system of polynomial equations:
\begin{equation}\label{eq: polynomial system}
\begin{cases}
\sum_{j=1}^3\left(Q_j^{2}-2u_j^{2}+d_j^{2}-L_j^{2}+e_j^{2}\right)=0\\
\sum_{j=1}^{3}\left(Q_j+8u_j+2d_j+3L_j+6e_j\right)=0\\
\sum_{j=1}^3\left(2Q_j+u_j+d_j\right)=0\\
\sum_{j=1}^3\left(3Q_j+L_j\right)=0\\
\sum_{j=1}^3\left(6Q_j^{3}+3u_j^{3}+3d_j^{3}+2L_j^{3}+e_j^{3}+\nu_j^3\right)=0\\
\sum_{j=1}^3\left(2L_j+e_j+\nu_j\right)=0\\
\sum_{j=1}^3 (2Q_j-4u_j+2d_j-2L_j+2e_j)=0\\
Q_3+u_3+H_2=0,\ 
Q_2+u_2+H_2=4,\ 
Q_1+u_1+H_2=7\\ 
Q_3+d_3-H_1=3,\ 
Q_2+d_2-H_1=5,\
Q_1+d_1-H_1=7\\ 
L_3+e_3-H_1=3,\ 
L_2+e_2-H_1=4,\ 
L_1+e_1-H_1=8\\
Q_1-Q_2=1,\
Q_2-Q_3=2\\  
L_2-L_3=0,\
L_2+H_1=0
\end{cases}
\end{equation}
To find flavon charges that satisfies this system we will proceed by using Gröbner bases and methods from Diophantine geometry.
\section{Algebraic geometry}\label{sec: algebraic geometry}
In this section we discuss some general aspects of algebraic geometry and give some results useful for finding rational charges. A more detailed, but still short description may be found in \cite{Tellander2018}. 

The first tool we want to mention is the key notion of computational algebraic geometry, that of Gröbner bases. The Gröbner basis of a system of equations may be thought of as the most reduced version of the system, similar to putting a linear system of equations on echelon form. As for a linear system on echelon form, a Gröbner basis 
with a given
lexicographic ordering has the property that once the last equation is solved, all other equations may be solved by back-substitution. Another useful property is that a system has no solution if and only if 1 is in the Gröbner basis. 

As for calculating the Gröbner basis in practice, the exact method is in general of no interest for such applied problems we study here, so one can without worry use it as a black-box command in e.g. Sage \cite{sagemath} or Macaulay 2 \cite{M2} (it is also implemented in some general purpose programs such as Maple \cite{maple} and Mathematica \cite{mathematica}). 

The set of solutions to a system of polynomials is called a variety. If the variety is zero-dimensional, i.e. consists of points, then the cubic and quadratic equations from the anomaly conditions make it unlikely that these points would be rational. To find rational points, it is therefore in general best to choose the number of linear constraints to implement such 
that the variety becomes one-dimensional, i.e. a curve
\footnote{At the same time, if it turns out that this curve is linear in one of the charges and has no dependence on the others, then one can add one more linear constraint giving a point solution. We will see two such special cases below.}. 
There is a rich literature on finding rational points on algebraic curves, from which we will discuss a few of the results below.

Let $C$ be a curve with rational coefficients defined by a polynomial equation $P(x,y)=0$, we call this an affine curve and we denote the set of rational points $C(\mathbb{Q})$ where $\mathbb{Q}$ denotes the rational numbers. The corresponding \emph{projective} curve is defined by $Z^{\mathrm{deg}P}P(X/Z,Y/Z)$, such that all terms in the polynomial has the same total degree in $X,Y,Z$, and we assume without loss of generality (see \cite{Fulton1969} Section 7.5 Theorem 3) that it is smooth. Smooth curves satisfy the following trichotomy classified by the genus $g$:
\begin{itemize}
	\item $g=0$:\\
	Here we have two choices: either $C(\mathbb{Q})=\emptyset$ or $C(\mathbb{Q})$ is non-empty which means that $C$ is isomorphic over $\mathbb{Q}$ to the projective line $\mathbb{P}^1$. Any such isomorphism defines a parameterization of $C(\mathbb{Q})$ in terms of rational functions in one variable, which is easily computable. For example, all rational points on the unit circle $x^2+y^2=1$ are given by
	\begin{equation}
	(x(t),y(t))=\left(\frac{1-t^2}{1+t^2},\frac{2t}{1+t^2}\right)
	\end{equation} 
	for $t\in\mathbb{P}^1(\mathbb{Q})=\mathbb{Q}\cup\{\infty\}$.
	\item $g=1$:\\
	For this case we have the following theorem:
	\begin{theorem}\label{theorem: Mordell-Weil}
		\textbf{Mordell-Weil:} For any Abelian variety the set of $K$-rational points form a finitely generated group.
	\end{theorem}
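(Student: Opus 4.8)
The plan is to establish the result in two largely independent stages and then fuse them by an infinite descent. Working over the number field $K$, write $A$ for the abelian variety and $A(K)$ for its group of rational points. The two ingredients are: (i) the \emph{weak Mordell-Weil theorem}, that for some fixed integer $m \ge 2$ the quotient $A(K)/mA(K)$ is finite; and (ii) the \emph{theory of heights}, supplying a real-valued function $h$ on $A(K)$ that measures the arithmetic complexity of a point. Neither statement gives finite generation on its own, but together they force it.

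First I would prove the weak Mordell-Weil theorem, which I expect to be the crux. The idea is to exploit the multiplication-by-$m$ isogeny $[m]\colon A \to A$, which is surjective on $\bar{K}$-points with kernel the finite group scheme $A[m]$. This gives a short exact sequence of $\mathrm{Gal}(\bar{K}/K)$-modules, and passing to Galois cohomology produces the \emph{Kummer sequence}
\[
0 \to A(K)/mA(K) \to H^1(\mathrm{Gal}(\bar{K}/K), A[m]) \to H^1(\mathrm{Gal}(\bar{K}/K), A)[m] \to 0,
\]
so that $A(K)/mA(K)$ embeds into the cohomology of the finite module $A[m]$. That cohomology group is itself infinite, so the essential point is to show that the image consists of classes \emph{unramified} outside a finite set $S$ of places of $K$ (those dividing $m$, the archimedean places, and the places of bad reduction of $A$). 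The finiteness of this group of $S$-unramified classes then reduces to the classical finiteness theorems of algebraic number theory: finiteness of the class group, the Dirichlet unit theorem, and the Hermite-Minkowski bound on extensions of bounded degree and ramification.

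Second I would develop just enough of the theory of heights to run the descent. One constructs a height $h\colon A(K) \to \mathbb{R}_{\ge 0}$ — for instance the canonical N\'eron-Tate height — with two crucial properties: the Northcott finiteness, that $\{P \in A(K) : h(P) \le B\}$ is finite for every bound $B$; and quadratic behaviour under multiplication, $h(mP) = m^2 h(P) + O(1)$, together with a triangle-type estimate $h(P-Q) \le 2h(P) + O(1)$. The descent then combines these with stage (i): fixing coset representatives $Q_1,\ldots,Q_r$ for $A(K)/mA(K)$, any $P$ is written $P = Q_i + mP'$, and the estimates give $h(P') \le \tfrac{1}{2} h(P) + c$ for a constant $c$ depending only on the $Q_i$. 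Iterating, the heights of the successive quotient points strictly decrease until they fall below $2c$, so every point of $A(K)$ is a $\mathbb{Z}$-linear combination of the finitely many $Q_i$ and the finitely many points of height at most $2c$. Hence $A(K)$ is finitely generated.

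The main obstacle is stage (i). The embedding of $A(K)/mA(K)$ into Galois cohomology is formal, but the genuine arithmetic content of the theorem lies in confining the image to the finite group of classes unramified outside $S$, which one ultimately grounds in the finiteness of the class group and of the set of number fields of bounded degree and ramification. By contrast the height machinery of stage (ii) is soft: once the Northcott property and the quadratic estimate are in hand, the descent is purely formal.
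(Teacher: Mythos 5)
Your proposal is mathematically sound, but it is worth noting that the paper does not actually prove this theorem at all: its ``proof'' consists solely of citations to the original papers of Mordell (1922, for elliptic curves over $\mathbb{Q}$) and Weil (1929, for abelian varieties over number fields), the theorem being used purely as a black box to justify searching for Mordell--Weil generators of the elliptic curve arising from the anomaly constraints. What you have written is the standard modern architecture of the proof those references contain: the weak Mordell--Weil theorem via the Kummer sequence in Galois cohomology, with the image confined to classes unramified outside a finite set $S$ and finiteness grounded in the class group, the unit theorem, and Hermite--Minkowski; then the height machinery (Northcott finiteness, $h(mP)=m^2h(P)+O(1)$, and $h(P-Q)\le 2h(P)+O(1)$ for fixed $Q$) feeding the descent $h(P')\le \tfrac12 h(P)+c$. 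Your decomposition and the descent bookkeeping are correct, and you rightly identify weak Mordell--Weil as the arithmetic crux while the height descent is formal. Two small points: you correctly restrict to $K$ a number field, which the paper's statement glosses over (the theorem is false over, say, $\mathbb{C}$ or $\mathbb{Q}_p$, though it does extend to finitely generated fields); and if you invoke the canonical N\'eron--Tate height, the relation $h(mP)=m^2h(P)$ holds exactly, so the $O(1)$ there is only needed if you work with a naive Weil height instead. In short, where the paper defers entirely to the literature, you have supplied the actual content of that literature in modern cohomological language, at the level of a correct and complete proof outline.
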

	\begin{proof}
		For the original proof for elliptic curves by Mordell, see \cite{Mordell1922}, and for the generalization to Abelian varieties by Weil, see \cite{Weil1929}.
	\end{proof}
	For $K=\mathbb{Q}$ this means that the only genus one curves with rational points are the elliptic curves.
	\item $g\ge 2$:\\
	For these higher genus curves, Mordell \cite{Mordell1922} conjectured and Falting \cite{Faltings1983} later proved that the set of $K$-rational points is finite.
\end{itemize}
For genus zero curves, the rational parameterization (if it exists) is easily obtained using the programs we have already mentioned. There are also well-developed methods to find integer solutions, see refs. \cite{Poulakis2000,Poulakis2002}.

In the case of genus one curves,  
we know by the Mordell-Weil theorem that the set of rational points on an elliptic curve form a finitely generated Abelian group, denoted $E(\mathbb{Q})$. 
The structure theorem then tells us that
\begin{equation}
E(\mathbb{Q})=E(\mathbb{Q})_\mathrm{tors}\oplus\mathbb{Z}P_1\oplus\ldots\oplus\mathbb{Z}P_r
\end{equation}
where $E(\mathbb{Q})_\mathrm{tors}$ is the finite subgroup of $E(\mathbb{Q})$ consisting of all elements of finite order and $r$ is the rank of $E(\mathbb{Q})$. There is no known algorithm to determine the rank $r$ or to find the Mordell-Weil generators $P_1,\ldots,P_r$ in general.

For curves of genus at least two it is harder to find rational points. However, point search might turn out to be more successful than for elliptic curves since the rational points are expected to have smaller height for curves with higher genus 
\cite{Stoll2011}. Here the height of a point $P(X:Y:Z)$, where $X,Y$ and $Z$ are integers with no common factors, is given by $\max\{|X|,|Y|,|Z|\}$.

In our type of models we have one cubic $(\mathcal{A}_{1'1'1'})$ and one quadratic $(\mathcal{A}_{11'1'})$ equation while the rest are linear. The typical degree of the variety is therefore six. 
However, given additional fermions only charged under $U(1)'$ and not under the SM groups, $\mathcal{A}_{11'1'}$ will still depend only on the SM fields 
whereas $\mathcal{A}_{1'1'1'}$ depends on the additional fields. 
In such a case the Gröbner basis may decouple into two parts that can be solved independently if there are enough linear constraints. Thus, $\mathcal{A}_{11'1'}$ may be solved independently from $\mathcal{A}_{1'1'1'}$ and we expect a solution of degree at most three for the latter one. Similarly if the cubic constraint is not applied we expect a solution of degree two at most. 
As a curve of degree three (two) has at most genus one (zero), the above methods are typically enough to now go back and solve our system in Eq.~(\ref{eq: polynomial system}).

\section{Anomaly-free Model Example, continued}\label{sec: anomaly-free}
Using Sage we find that 
the Gröbner basis for the system in Eq.~(\ref{eq: polynomial system}) is given by
\begin{equation}\label{eq: example groebner basis}
\begin{array}{rrr}
Q_1 - 8/27=0,&\qquad Q_2 + 19/27=0,&\qquad Q_3 + 73/27=0,\\
u_1 - 26/27=0,&\qquad u_2 + 28/27=0,&\qquad u_3 + 82/27=0,\\
d_1 - 34/9=0,&\qquad d_2 - 25/9=0,&\qquad d_3 - 25/9=0,\\
L_1 - 94/27=0,&\qquad L_2 - 79/27=0,&\qquad L_3 - 79/27=0,\\
e_1 - 43/27=0,&\qquad e_2 + 50/27=0,&\qquad e_3 + 77/27=0,\\
H_1 + 79/27=0,&\qquad H_2 - 155/27=0,&\qquad 
\end{array}
\end{equation}
and
\begin{align}\label{eq: example equation}
&\nu_1 + \nu_2 + \nu_3 + 140/9=0,\nonumber\\
&\nu_2^2\cdot \nu_3 + 140/9\cdot \nu_2^2 + \nu_2\cdot \nu_3^2 + 280/9\cdot \nu_2\cdot \nu_3 + 19600/81\cdot \nu_2 +\\
&+ 140/9\cdot \nu_3^2 + 19600/81\cdot \nu_3 + 95036/81=0.\nonumber
\end{align}
In this case the Gröbner basis has decoupled into two parts, as discussed above, 
with the flavon charges of the three right-handed singlets determined by the cubic and gravitational anomalies whereas the flavon charges of all the SM fields are determined by the other anomalies and the FN constraints. The only connection between the two is that the flavon charges of the SM fields feeds into the numerical constants in Eq.~(\ref{eq: example equation}).
All flavon charges can thus be directly read off from the Gröbner basis except those of the three right-handed singlets. 
Note that $H_2-H_1=26/3\notin\mathbb{Z}$ so the initially imposed type-II $\mathbb{Z}_2$-symmetry is in this case a residual effect from $U(1)'$ invariance. 
It should also be noted that if the non-mixing constraint  Eq.~(\ref{eq: kinetic mixing}) is not applied, then this would lead to one of the charges in Eq.~(\ref{eq: example groebner basis}) to be left free, the solution being given by a genus zero curve, and the other charges would be linear functions of it. 

To find a complete set of rational charges, we have to solve the equations for the $\nu_i$'s. To do this, we begin to study the cubic equation in Eq.~(\ref{eq: example equation}). 
This is a smooth curve of degree three so its genus is one 
and thereby it is an elliptic curve. 
We may thus hope to calculate the Mordell-Weil generators. 
The starting point is to write the curve on Weierstrass form. To do this we 
first write the curve on its projective form by introducing the homogenizing variable $h$. Next we map the projective version according to:
\begin{equation}\label{eq: map to Weierstrass}
(\nu_2:\nu_3:h)\mapsto(X:Y:Z)=\left(h : -\nu_3 - h : \frac{81}{95036}\nu_2 + \frac{81}{95036}\nu_3\right)
\end{equation} 
Given the form of Eq.~(\ref{eq: example equation}) this mapping can always be found. 
Finally, changing variables to $x=X/Z$ and $y=Y/Z$ gives the curve on Weierstrass form:
\begin{equation}
E:\ y^2 + 2xy +\frac{95036}{81}y = x^3 +\frac{19519}{81}x^2 + \frac{12449716}{729}x.
\end{equation}  
We do not find rational solutions to this curve directly, but after
 doing two-descent in Sage, 
it is found that this curve has rank one which means that the set of rational points is given by
\begin{equation}
E(\mathbb{Q})=E(\mathbb{Q})_{\mathrm{tors}}\oplus\mathbb{Z}P_1.
\end{equation} 
Explicitly we find
\begin{align}
E(\mathbb{Q})_\mathrm{tors}&=\{(0 : -95036/81 : 1), (0 : 0 : 1), (0 : 1 : 0)\}\nonumber\\
P_1&=(2041940/81 : 323674124/81 : 1)
\end{align}
and by mapping the point $P_1$ back to the original curve in Eq.~(\ref{eq: example equation}), by inverting Eq. (\ref{eq: map to Weierstrass}) such that $\nu_3=-\dfrac{X+Y}{X}$ and $\nu_2= \dfrac{95036}{81}\dfrac{Z}{X}-\nu_3$, 
we get
\begin{equation}\label{eq: example rational solution}
(\nu_2,\nu_3)=\left(\frac{30795}{193},-\frac{18344}{115}\right).
\end{equation} 
The last charge is now simply determined by
\begin{equation}\label{eq: example R1 charge}
\nu_1=-\frac{140}{9}-\nu_2-\nu_3=-\frac{3116597}{199755}
\end{equation}
Out of all the points generated by $P_1$ this one has the smallest height we have found.

\begin{table}
	\centering
	\caption{An example of rational charges satisfying Eq.~(\ref{eq: polynomial system}).}
	\label{table: examples charges}
	\begin{tabular}{lrrrrrc}
		\hline
		Generation $i$ &\quad $Q_i$ &\quad $u_i$ &\quad $d_i$ &\quad $L_i$ &\quad $e_i$ &\quad $\nu_i$\\
		\hline
		1 &\quad $\frac{8}{27}$ &\quad $\frac{26}{27}$ &\quad $\frac{34}{9}$ &\quad $\frac{94}{27}$ &\quad $\frac{43}{27}$ &\quad $-\frac{3116597}{199755}$\\
		2 &\quad $-\frac{19}{27}$ &\quad $-\frac{28}{27}$ &\quad $\frac{25}{9}$ &\quad $\frac{79}{27}$ &\quad $-\frac{50}{27}$ &\quad $\frac{30795}{193}$\\
		3 &\quad $-\frac{73}{27}$ &\quad $-\frac{82}{27}$ &\quad $\frac{25}{9}$ &\quad $\frac{79}{27}$ &\quad $-\frac{77}{27}$ &\quad $-\frac{18344}{115}$ \\  
		\hline
		Higgs charges: &\quad $H_1=-\frac{79}{27}$ &\quad $H_2=\frac{155}{27}$ & & & &\\
		\hline
	\end{tabular}
\end{table}

To summarize the results in this section, the complete set of flavon charges for this model is shown in Table \ref{table: examples charges} and the Yukawa matrices and mixings they produce are:
\begin{align}\label{eq: example couplings}
&Y_2^U\sim\begin{pmatrix}
\epsilon^7 & \epsilon^5 & \epsilon^3\\
\epsilon^6 & \epsilon^4 & \epsilon^2\\
\epsilon^4 & \epsilon^2 & \epsilon^0
\end{pmatrix},\ \ \ &&Y_1^D\sim\begin{pmatrix}
\epsilon^7 & \epsilon^6 & \epsilon^6 \\
\epsilon^6 & \epsilon^5 & \epsilon^5\\
\epsilon^4 & \epsilon^3 & \epsilon^3
\end{pmatrix}\ \ \ &&V_{CKM}\sim\begin{pmatrix}
1 & \epsilon & \epsilon^3\\
\epsilon & 1 & \epsilon^2\\
\epsilon^3 & \epsilon^2 & 1
\end{pmatrix}\nonumber\\
&Y_1^L\sim\begin{pmatrix}
\epsilon^8 & 0 & 0\\
0 & \epsilon^4 & \epsilon^3\\
0 & \epsilon^4 & \epsilon^3
\end{pmatrix}\ \ \ && \kappa_{11}\sim\begin{pmatrix}
0 & 0 & 0\\
0 & 1 & 1\\
0 & 1 & 1
\end{pmatrix}\ \ \ && U_{PMNS}\sim\begin{pmatrix}
1 & 0 & 0\\
0 & 1 & 1\\
0 & 1 & 1
\end{pmatrix}.
\end{align}
As can be seen from the resulting matrices, all elements which were left undetermined turn out to be zero due to $U(1)'$ invariance. This also means that in the resulting PMNS-matrix, there is no neutrino oscillation with the first generation. Apart from that, these 
matrices reproduce all observed flavor phenomenology.

\section{Models Requiring New Chiral Fermions}\label{sec: anomalous}
Assuming that the $\mathcal{A}_{1'1'1'}$ and $\mathcal{A}_{gg1'}$ anomalies vanish is equivalent to assuming that there either are no unknown SM-neutral fermions, or that the unknown SM-neutral fermions cancel the anomalies independently. However, since we in reality know nothing about SM-neutral fermions, it is reasonable to claim that $\mathcal{A}_{1'1'1'}$ and $\mathcal{A}_{gg1'}$ can not be used to constrain the flavon charges \cite{Weinberg1996}. For the theory to still be 
anomaly-free, we assume that the SM-neutral sector is such that these two anomalies vanish.

We again assume the SM fermion content with a type-II 2HDM and in 
addition 
three right-handed neutrinos $N_R$. 
The additional freedom from not imposing the $\mathcal{A}_{1'1'1'}$ and $\mathcal{A}_{gg1'}$ anomalies is used to impose a type-I seesaw mechanism to generate the neutrino masses 
where we assume that $N_R$ only couples to $\Phi_2$ in Eq.~(\ref{eq: L_N}). 
For the quarks and leptons we impose the same Yukawa matrices as in the anomaly-free model:
\begin{equation}\label{eq: anomalous U D L Yukawa}
Y_2^U\sim\begin{pmatrix}
\epsilon^7 & \epsilon^5 & \epsilon^3\\
\epsilon^6 & \epsilon^4 & \epsilon^2\\
\epsilon^4 & \epsilon^2 & \epsilon^0
\end{pmatrix},\ \ \ Y_1^D\sim\begin{pmatrix}
\epsilon^7 & \epsilon^6 & \epsilon^6\\
\epsilon^6 & \epsilon^5 & \epsilon^5\\
\epsilon^4 & \epsilon^3 & \epsilon^3
\end{pmatrix},\ \ \ Y_1^L\sim\begin{pmatrix}
\epsilon^8 & * & *\\
* & \epsilon^4 & \epsilon^3\\
* & \epsilon^4 & \epsilon^3
\end{pmatrix}
\end{equation}
and for the neutrinos we impose
\begin{eqnarray}\label{eq: anomalous N Yukawa}
Y_2^N\sim\begin{pmatrix}
*&*&*\\
\epsilon&1&1\\
\epsilon&1&1
\end{pmatrix},\ \ \ M_R\sim\Lambda_{FN}\begin{pmatrix}
\epsilon^2 & \epsilon & \epsilon \\
\epsilon & 1 & 1\\
\epsilon & 1 & 1
\end{pmatrix}.
\end{eqnarray}

All this is summarized in the constraints below where we have  again imposed vanishing mixing between $U(1)'$ and $U(1)_Y$ (Eq.~(\ref{eq: kinetic mixing}))
\begin{equation}\label{eq: anomalous model example}
\begin{cases}
\sum_{j=1}^3\left(Q_j^{2}-2u_j^{2}+d_j^{2}-L_j^{2}+e_j^{2}\right)=0\\
\sum_{j=1}^{3}\left(Q_j+8u_j+2d_j+3L_j+6e_j\right)=0\\
\sum_{j=1}^3\left(2Q_j+u_j+d_j\right)=0\\
\sum_{j=1}^3\left(3Q_j+L_j\right)=0\\
\sum_{j=1}^3 (2Q_j-4u_j+2d_j-2L_j+2e_j)=0\\
Q_3+u_3+H_2=0,\ 
Q_2+u_2+H_2=4,\ 
Q_1+u_1+H_2=7\\ 
Q_3+d_3-H_1=3,\ 
Q_2+d_2-H_1=5,\
Q_1+d_1-H_1=7\\ 
L_3+e_3-H_1=3,\ 
L_2+e_2-H_1=4,\ 
L_1+e_1-H_1=8\\
Q_1-Q_2=1,\
Q_2-Q_3=2\\  
L_2-L_3=0,\
L_2+\nu_2+H_2=0,\\
\nu_1=1,\ \nu_2=0,\ \nu_3=0
\end{cases}
\end{equation}
which has the Gröbner basis
\begin{equation}
\begin{array}{rrr}
Q_1 - 274/135 = 0,&\qquad Q_2 - 139/135 = 0,&\qquad Q_3 + 131/135 = 0,\\ 
u_1 - 364/135 = 0,&\qquad u_2 -  94/135 = 0,&\qquad u_3 + 176/135 = 0,\\ 
d_1 +   64/45 = 0,&\qquad d_2 +  109/45 = 0,&\qquad d_3 +  109/45 = 0,\\
L_1 + 232/135 = 0,&\qquad L_2 + 307/135 = 0,&\qquad L_3 + 307/135 = 0,\\ 
e_1 - 449/135 = 0,&\qquad e_2 +  16/135 = 0,&\qquad e_3 + 151/135 = 0,\\
\nu_1      -1 = 0,&\qquad \nu_2         = 0,&\qquad \nu_3         = 0,\\
H_1 + 863/135 = 0,&\qquad H_2 - 307/135 = 0, & 
\end{array}
\end{equation}
where we see that all  charges are directly determined. Again, note that $H_2-H_1=26/3\notin\mathbb{Z}$ so the imposed type-II $\mathbb{Z}_2$-symmetry is a residual from $U(1)'$ invariance. 
In addition, the elements that have been left unconstrained in Eqs.~(\ref{eq: anomalous U D L Yukawa}) and 
(\ref{eq: anomalous N Yukawa}) turn out to be zero.

Using the seesaw mechanism, the light neutrino mass matrix becomes
\begin{equation}
m_\nu\sim\frac{\avg{\Phi_2}^2}{\Lambda_{FN}}\begin{pmatrix}
0 & 0 & 0\\
0 & 1 & 1\\
0 & 1 & 1
\end{pmatrix}
\end{equation} 
which yields one massless neutrino and large $\nu_\mu-\nu_\tau$ mixing assuming normal hierarchy.

\section{Supersymmetric Model Example}\label{sec: supersymmetry}
Using the Froggatt-Nielsen mechanism together with supersymmetry is not a straight-forward extension of what has been done above, in particular, we start by recalling that supersymmetry and anomaly cancellation is contradictory within the Froggatt-Nielsen framework \cite{IBANEZ1994,BINETRUY1995}. 

To see this, we start with the superpotential from minimal supersymmetric SM \cite{Aitchison2007} with right-handed neutrinos: 
\begin{equation}
W=Y_{ij}^UU_j^cQ_i\cdot H_u-Y_{ij}^DD_j^cQ_i\cdot H_d-Y_{ij}^LE_j^cL_i\cdot H_d+Y_{ij}^NN_j^cL_i\cdot H_u+\frac{1}{2}M_{ij}N_i^{c}N_j^{c}+\mu H_u\cdot H_d
\end{equation}
where all fields are now superfields and there is no ``+H.c." as in the SM since supersymmetry invariance demands $W$ to be holomorphic in each of the fields. That is, for a superfield $\Psi$, $W$ is either a function of $\Psi$ or $\Psi^\dagger$, not both. This is important in the context of the Froggatt-Nielsen mechanism. In the previous cases we always had the choice of inserting $S$ or $S^*$ to balance the flavon charges, whilst now we may only use one of them. We choose to work with $S$ with flavon charge -1 following ref.~\cite{Dreiner2005}. The flavon charges of the left-handed superfields $\{Q_i,U_i^c,D_i^c,L_i,E_i^c,N_i^c,H_u,H_d\}$ are denoted as $\{Q_i,u_i,d_i,L_i,e_i,\nu_i,H_u,H_d\}$. Using the Froggatt-Nielsen mechanism the Yukawa matrices become
\begin{align}\label{eq: yukawa supersymmetry}
Y_{ij}^U&=g_{ij}^U\left(\frac{\avg{S}}{\Lambda_{FN}}\right)^{Q_i+u_j+H_u},\ \ \ 
Y_{ij}^D=g_{ij}^D\left(\frac{\avg{S}}{\Lambda_{FN}}\right)^{Q_i+d_j+H_d},\ \ \ 
Y_{ij}^L=g_{ij}^L\left(\frac{\avg{S}}{\Lambda_{FN}}\right)^{L_i+e_j+H_d}\nonumber\\
Y_{ij}^N&=g_{ij}^N\left(\frac{\avg{S}}{\Lambda_{FN}}\right)^{L_i+\nu_j+H_u},\ \ \ 
M_{ij}=g_{ij}^R\left(\frac{\avg{S}}{\Lambda_{FN}}\right)^{\nu_i+\nu_j}.
\end{align}
Note that, this means that in the supersymmetric case, the definitions of the suppression factors $n_u$ etc. are slightly different compared to earlier and now instead given by Eq.~(\ref{eq: yukawa supersymmetry}).

Imposing supersymmetry also affects the triangle anomalies since there will now be Higgsino and flavino fields contributing. The anomaly coefficients are now
\begin{equation}
\begin{cases}
\mathcal{A'}_{331'}=\mathcal{A}_{331'}\\
\mathcal{A'}_{221'}=\frac{1}{2}(H_u+H_d)+\mathcal{A}_{221'}\\
\mathcal{A'}_{111'}=2(H_u+H_d)+\mathcal{A}_{111'}\\
\mathcal{A'}_{11'1'}=2(H_u^2-H_d^2)+\mathcal{A}_{11'1'}\\
\mathcal{A'}_{1'1'1'}=2(H_u^3+H_d^3)+S^3+\mathcal{A}_{1'1'1'}+\mathcal{A}_{1'1'1'}^{\mathrm{SM-neutral}}\\
\mathcal{A'}_{gg1'}=2(H_u+H_d)+S+\mathcal{A}_{gg1'}+\mathcal{A}_{gg1'}^{\mathrm{SM-neutral}}
\end{cases}
\end{equation}
where $S=-1$ is the charge of the flavon superfield (assumed to be left-handed).

We impose the following $\epsilon$-structure for the couplings in the superpotential
\begin{align}\label{eq: matrices green-schwarz}
&Y^U\sim\begin{pmatrix}
\epsilon^7 & \epsilon^4 & \epsilon^3\\
\epsilon^6 & \epsilon^3 & \epsilon^2\\
\epsilon^4 & \epsilon^1 & \epsilon^0
\end{pmatrix},\ \ \ Y^D\sim\begin{pmatrix}
\epsilon^7 & \epsilon^6 & \epsilon^5\\
\epsilon^6 & \epsilon^5 & \epsilon^4\\
\epsilon^4 & \epsilon^3 & \epsilon^2
\end{pmatrix}\ \ \ Y^L\sim\begin{pmatrix}
\epsilon^8 & \epsilon^5 & \epsilon^4\\
\epsilon^7 & \epsilon^4 & \epsilon^3\\
\epsilon^7 & \epsilon^4 & \epsilon^3
\end{pmatrix}\nonumber\\
&M\sim\Lambda_{FN}\begin{pmatrix}
\epsilon^2 & \epsilon & \epsilon \\
\epsilon & 1 & 1\\
\epsilon & 1 & 1
\end{pmatrix}\ \ \ Y^N\sim\begin{pmatrix}
\epsilon^2&\epsilon&\epsilon\\
\epsilon&1&1\\
\epsilon&1&1
\end{pmatrix},
\end{align}
where it should be noted that, for reasons that will become clear below, the suppression factors for the $c$ and $b$ quark Yukawa couplings has been changed compared to earlier. In addition, we find that it is now also possible to completely determine the Yukawa matrices for the leptons such that they also give a PMNS-matrix with three-generation mixing.

Using the suppression factors $n_u$ etc.,  as defined by Eq.~(\ref{eq: yukawa supersymmetry}), together with the anomaly conditions gives the following two supersymmetric versions of the sum rules:
\begin{equation}\label{eq: mass product}
n_u+n_c+n_t+n_d+n_s+n_b=2\mathcal{A'}_{331'}+3(H_u+H_d)
\end{equation}
and
\begin{equation}\label{eq: mass quotient}
n_d+n_s+n_b-n_e-n_\mu-n_\tau={H_u+H_d-\left(\frac{1}{4}\mathcal{A'}_{111'}+\mathcal{A'}_{221'}-\frac{8}{3}\mathcal{A'}_{331'}\right)}.
\end{equation}
If the anomalies vanish, Eq.~(\ref{eq: mass product}) together with Eq.~(\ref{eq: matrices green-schwarz}) imply that $H_u+H_d=8$. On the other hand, vanishing anomalies together with Eqs.~(\ref{eq: mass quotient}) and (\ref{eq: matrices green-schwarz}) imply that $H_u+H_d=-1$ which directly contradicts $H_u+H_d=8$. This means that the Froggatt-Nielsen mechanism, vanishing anomalies and supersymmetry may not be joined together. 

To circumvent this, we may assume that the anomaly coefficients are non-zero but that there exists a string theoretic UV completion of the theory where the anomalies cancel via the Green-Schwarz mechanism \cite{Green1984}. For this to work we need to balance the anomaly coefficients and the so-called Kac-Moody levels $k_G$ (where $G$ labels the gauge group) below the compactification scale according to
\begin{equation}\label{eq: Green-Schwarz}
\frac{\mathcal{A'}_{111'}}{k_1}=\frac{\mathcal{A'}_{221'}}{k_2}=\frac{\mathcal{A'}_{331'}}{k_3}=\frac{\mathcal{A'}_{1'1'1'}}{3k_{1'}}=\frac{\mathcal{A'}_{gg1'}}{24}.
\end{equation}
Since the $\mathcal{A'}_{11'1'}$ anomaly can not be canceled by the Green-Schwarz mechanism we have to impose $\mathcal{A'}_{11'1'}=0$.

To obtain useful constraints out of Eq.~(\ref{eq: Green-Schwarz}) we make the standard assumption of coupling unification at the compactification scale (\cite{Maekawa2001,Dreiner2005,Chankowski2005,Kane2005}), which with our normalization of hypercharge means that $k_2=k_3$ and $k_1/k_2=20/3$. Moreover, we assume that there are SM-neutral contributions to $\mathcal{A'}_{1'1'1'}$ and $\mathcal{A'}_{gg1'}$ so that Eq.~(\ref{eq: Green-Schwarz}) is satisfied. The constraints on the flavon charges using the Green-Schwarz mechanism are thus
\begin{equation}
\begin{cases}
\mathcal{A'}_{221'}=\mathcal{A'}_{331'}\\
\mathcal{A'}_{221'}=\frac{3}{20}\mathcal{A'}_{111'}\\
\mathcal{A'}_{11'1'}=0.
\end{cases}
\end{equation}

This directly implies that $\frac{1}{4}\mathcal{A}_{111'}'+\mathcal{A}_{221'}-\frac{8}{3}\mathcal{A}_{331'}'=0$, 
which together with the suppression factors $n_i$ from Eq.~(\ref{eq: matrices green-schwarz}) gives $H_u+H_d=-1$. In turn, this
means that the $\mu$-term in the superpotential has to vanish. We note in the passing that this means that the so called $\mu$-problem may then be solved by the Giudice-Masiero mechanism \cite{Giudice1988}.  

All the constraints are summarized in the following system of equations:
\begin{equation}\label{eq: example green-schwarz}
\begin{cases}
\sum_{j=1}^3\left(Q_j^{2}-2u_j^{2}+d_j^{2}-L_j^{2}+e_j^{2}\right)+H_u^2-H_d^2=0\\
\frac{1}{2}\left[\sum_{j=1}^3\left(3Q_j+L_j\right)+H_u+H_d\right]-\frac{3}{20}\left[\frac{2}{3}\sum_{j=1}^{3}\left(Q_j+8u_j+2d_j+3L_j+6e_j\right)+2(H_u+H_d)\right]=0\\
\sum_{j=1}^3\left(2Q_j+u_j+d_j\right)-\sum_{j=1}^3\left(3Q_j+L_j\right)-H_u-H_d=0\\
Q_3+u_3+H_u=0,\ 
Q_2+u_2+H_u=3,\ 
Q_1+u_1+H_u=7\\ 
Q_3+d_3+H_d=2,\ 
Q_2+d_2+H_d=5\
Q_1+d_1+H_d=7\\ 
L_3+e_3+H_d=3,\ 
L_2+e_2+H_d=4,\ 
L_1+e_1+H_d=8\\
Q_1-Q_2=1,\
Q_2-Q_3=2\\ 
L_3+\nu_3+H_u=0,\
L_1-L_2=1,\ 
L_2-L_3=0\\
\nu_1=1,\ \nu_2=0,\ \nu_3=0
\end{cases}
\end{equation}
where we no longer have the freedom to remove the $U(1)_Y-U(1)'$ mixing in the massless limit. This system has a Gröbner basis defining a variety of just one point, which is given in Table ~\ref{table: charges green-schwarz}. 

Using a type-I seesaw mechanism, the light physical neutrino masses and mixings become:
\begin{equation}
m_\nu\sim\frac{1}{\Lambda_{FN}^2}\begin{pmatrix}
\epsilon^2 & \epsilon & \epsilon\\
\epsilon & 1 & 1\\
\epsilon & 1 & 1
\end{pmatrix},\ \ \ U_{PMNS}\sim\begin{pmatrix}
1 & \epsilon & \epsilon\\
\epsilon & 1 & 1\\
\epsilon & 1 & 1
\end{pmatrix}.
\end{equation}
This model reproduces all the fermion masses and mixings, including neutrino oscillations in three generations. 
\begin{table}
	\centering
	\caption{The unique set of rational charges satisfying Eq.~(\ref{eq: example green-schwarz}).}
	\label{table: charges green-schwarz}
	\begin{tabular}{lcccccc}
		\hline
		Generation $i$ &\quad $Q_i$ &\quad $u_i$ &\quad $d_i$ &\quad $L_i$ &\quad $e_i$ &\quad $\nu_i$\\
		\hline
		1 &\quad $\frac{103}{27}$ &\quad $\frac{128}{27}$ &\quad $\frac{71}{27}$ &\quad $\frac{23}{9}$ &\quad $\frac{44}{9}$ &\quad $1$\\
		2 &\quad $\frac{76}{27}$ &\quad $\frac{47}{27}$ &\quad $\frac{44}{27}$ &\quad $\frac{14}{9}$ &\quad $\frac{17}{9}$ &\quad 0\\
		3 &\quad $\frac{22}{27}$ &\quad $\frac{20}{27}$ &\quad $\frac{17}{27}$ &\quad $\frac{14}{9}$ &\quad $\frac{8}{9}$ &\quad 0 \\  
		\hline
		Higgs charges: &\quad $H_u=-\frac{14}{9}$ &\quad $H_d=\frac{5}{9}$ & & & &\\
		\hline
	\end{tabular}
\end{table}

\section{Summary and Conclusions}\label{sec: conclusion}
Understanding the flavor structure in the Standard Model is one of the big open questions in modern particle physics. An attractive way to explain this structure is the Froggatt-Nielsen mechanism. The new $U(1)$ charges, flavon charge, in this mechanism must satisfy both anomaly and phenomenological constraints. To find rational charges satisfying these, we have in this paper introduced methods from algebraic geometry. Especially useful is the Gröbner basis which sees and eliminates all relations among the constraints so that the system is put on its most simple and reduced form. Moreover, we discussed in detail how to deal with the case when the Gröbner basis still contains a cubic constraint and show how to find rational charges using Mordell-Weil generators. 

We have also found that the Froggatt-Nielsen constraints for the suppression of the masses are related to linear combinations of the anomaly constraints which we summarize in a set of sum rules. From these rules we conclude that the type-II (MSSM like) 2HDM is the natural setup to avoid skewed Yukawa matrices. This especially means that the type-I model, and in extension, the Standard Model, is disfavored in this setting. At the same time, 2HDM where the doublets have different charge under a $U(1)$ symmetry may possess an axion à la Weinberg and Wilczek \cite{Weinberg1978,Wilczek1978}, however, we postpone this to a later paper~\cite{Tellander2018b}. 

To conclude, using methods from algebraic geometry to study anomaly-free (vanishing anomalies or Green-Schwarz cancellation) has proven to be very useful and should be of general interest in model building. 
\section*{Acknowledgements}
We thank Martin Helmer for conversations on algebraic geometry and Joel Oredsson for many discussions. 
This work is supported in part by the Swedish Research Council, contract number 2016-05996, the European Research Council (ERC) under the European Union's Horizon 2020 research and innovation programme (grant agreement No 668679) and by the Anders Wall Foundation. 

\bibliography{Ref}

\begin{thebibliography}{45}%
\makeatletter
\providecommand \@ifxundefined [1]{%
 \@ifx{#1\undefined}
}%
\providecommand \@ifnum [1]{%
 \ifnum #1\expandafter \@firstoftwo
 \else \expandafter \@secondoftwo
 \fi
}%
\providecommand \@ifx [1]{%
 \ifx #1\expandafter \@firstoftwo
 \else \expandafter \@secondoftwo
 \fi
}%
\providecommand \natexlab [1]{#1}%
\providecommand \enquote  [1]{``#1''}%
\providecommand \bibnamefont  [1]{#1}%
\providecommand \bibfnamefont [1]{#1}%
\providecommand \citenamefont [1]{#1}%
\providecommand \href@noop [0]{\@secondoftwo}%
\providecommand \href [0]{\begingroup \@sanitize@url \@href}%
\providecommand \@href[1]{\@@startlink{#1}\@@href}%
\providecommand \@@href[1]{\endgroup#1\@@endlink}%
\providecommand \@sanitize@url [0]{\catcode `\\12\catcode `\$12\catcode
  `\&12\catcode `\#12\catcode `\^12\catcode `\_12\catcode `\%12\relax}%
\providecommand \@@startlink[1]{}%
\providecommand \@@endlink[0]{}%
\providecommand \url  [0]{\begingroup\@sanitize@url \@url }%
\providecommand \@url [1]{\endgroup\@href {#1}{\urlprefix }}%
\providecommand \urlprefix  [0]{URL }%
\providecommand \Eprint [0]{\href }%
\providecommand \doibase [0]{http://dx.doi.org/}%
\providecommand \selectlanguage [0]{\@gobble}%
\providecommand \bibinfo  [0]{\@secondoftwo}%
\providecommand \bibfield  [0]{\@secondoftwo}%
\providecommand \translation [1]{[#1]}%
\providecommand \BibitemOpen [0]{}%
\providecommand \bibitemStop [0]{}%
\providecommand \bibitemNoStop [0]{.\EOS\space}%
\providecommand \EOS [0]{\spacefactor3000\relax}%
\providecommand \BibitemShut  [1]{\csname bibitem#1\endcsname}%
\let\auto@bib@innerbib\@empty
\bibitem [{\citenamefont {Minkowski}(1977)}]{MINKOWSKI1977}%
  \BibitemOpen
  \bibfield  {author} {\bibinfo {author} {\bibfnamefont {P.}~\bibnamefont
  {Minkowski}},\ }\href {\doibase 10.1016/0370-2693(77)90435-X} {\bibfield
  {journal} {\bibinfo  {journal} {Phys. Lett. B}\ }\textbf {\bibinfo {volume}
  {67}},\ \bibinfo {pages} {421 } (\bibinfo {year} {1977})}\BibitemShut
  {NoStop}%
\bibitem [{\citenamefont {Yanagida}(1979)}]{Yanagida:1979}%
  \BibitemOpen
  \bibfield  {author} {\bibinfo {author} {\bibfnamefont {T.}~\bibnamefont
  {Yanagida}},\ }\bibfield  {booktitle} {\emph {\bibinfo {booktitle}
  {{Proceedings: Workshop on the Unified Theories and the Baryon Number in the
  Universe: Tsukuba, Japan, February 13-14, 1979}}},\ }\href@noop {} {\bibfield
   {journal} {\bibinfo  {journal} {Conf. Proc.}\ }\textbf {\bibinfo {volume}
  {C7902131}},\ \bibinfo {pages} {95} (\bibinfo {year} {1979})}\BibitemShut
  {NoStop}%
\bibitem [{\citenamefont {{Mohapatra, R. N. and Senjanovi\ifmmode
  \acute{c}\else \'{c}\fi{}, G.}}(1980)}]{Mohapatra1980}%
  \BibitemOpen
  \bibfield  {author} {\bibinfo {author} {\bibnamefont {{Mohapatra, R. N. and
  Senjanovi\ifmmode \acute{c}\else \'{c}\fi{}, G.}}},\ }\href {\doibase
  10.1103/PhysRevLett.44.912} {\bibfield  {journal} {\bibinfo  {journal} {Phys.
  Rev. Lett.}\ }\textbf {\bibinfo {volume} {44}},\ \bibinfo {pages} {912}
  (\bibinfo {year} {1980})}\BibitemShut {NoStop}%
\bibitem [{\citenamefont {Gell-Mann}\ \emph {et~al.}(1979)\citenamefont
  {Gell-Mann}, \citenamefont {Ramond},\ and\ \citenamefont
  {Slansky}}]{GellMann:1980}%
  \BibitemOpen
  \bibfield  {author} {\bibinfo {author} {\bibfnamefont {M.}~\bibnamefont
  {Gell-Mann}}, \bibinfo {author} {\bibfnamefont {P.}~\bibnamefont {Ramond}}, \
  and\ \bibinfo {author} {\bibfnamefont {R.}~\bibnamefont {Slansky}},\
  }\bibfield  {booktitle} {\emph {\bibinfo {booktitle} {{Supergravity Workshop
  Stony Brook, New York, September 27-28, 1979}}},\ }\href@noop {} {\bibfield
  {journal} {\bibinfo  {journal} {Conf. Proc.}\ }\textbf {\bibinfo {volume}
  {C790927}},\ \bibinfo {pages} {315} (\bibinfo {year} {1979})},\ \Eprint
  {http://arxiv.org/abs/1306.4669} {arXiv:1306.4669 [hep-th]} \BibitemShut
  {NoStop}%
\bibitem [{\citenamefont {Weinberg}(1979)}]{Weinberg1979}%
  \BibitemOpen
  \bibfield  {author} {\bibinfo {author} {\bibfnamefont {S.}~\bibnamefont
  {Weinberg}},\ }\href {\doibase 10.1103/PhysRevLett.43.1566} {\bibfield
  {journal} {\bibinfo  {journal} {Phys. Rev. Lett.}\ }\textbf {\bibinfo
  {volume} {43}},\ \bibinfo {pages} {1566} (\bibinfo {year}
  {1979})}\BibitemShut {NoStop}%
\bibitem [{\citenamefont {Froggatt}\ and\ \citenamefont
  {Nielsen}(1979)}]{Froggatt1979}%
  \BibitemOpen
  \bibfield  {author} {\bibinfo {author} {\bibfnamefont {C.}~\bibnamefont
  {Froggatt}}\ and\ \bibinfo {author} {\bibfnamefont {H.}~\bibnamefont
  {Nielsen}},\ }\href {\doibase 10.1016/0550-3213(79)90316-X} {\bibfield
  {journal} {\bibinfo  {journal} {Nucl. Phys. B}\ }\textbf {\bibinfo {volume}
  {147}},\ \bibinfo {pages} {277 } (\bibinfo {year} {1979})}\BibitemShut
  {NoStop}%
\bibitem [{\citenamefont {Bijnens}\ and\ \citenamefont
  {Wetterich}(1987)}]{Bijnens1987}%
  \BibitemOpen
  \bibfield  {author} {\bibinfo {author} {\bibfnamefont {J.}~\bibnamefont
  {Bijnens}}\ and\ \bibinfo {author} {\bibfnamefont {C.}~\bibnamefont
  {Wetterich}},\ }\href {\doibase 10.1016/0550-3213(87)90271-9} {\bibfield
  {journal} {\bibinfo  {journal} {Nucl. Phys. B}\ }\textbf {\bibinfo {volume}
  {283}},\ \bibinfo {pages} {237 } (\bibinfo {year} {1987})}\BibitemShut
  {NoStop}%
\bibitem [{\citenamefont {Lee}(1973)}]{Lee1973}%
  \BibitemOpen
  \bibfield  {author} {\bibinfo {author} {\bibfnamefont {T.~D.}\ \bibnamefont
  {Lee}},\ }\href {\doibase 10.1103/PhysRevD.8.1226} {\bibfield  {journal}
  {\bibinfo  {journal} {Phys. Rev. D}\ }\textbf {\bibinfo {volume} {8}},\
  \bibinfo {pages} {1226} (\bibinfo {year} {1973})}\BibitemShut {NoStop}%
\bibitem [{\citenamefont {Branco}\ \emph {et~al.}(2012)\citenamefont {Branco},
  \citenamefont {Ferreira}, \citenamefont {Lavoura}, \citenamefont {Rebelo},
  \citenamefont {Sher},\ and\ \citenamefont {Silva}}]{Branco2011}%
  \BibitemOpen
  \bibfield  {author} {\bibinfo {author} {\bibfnamefont {G.~C.}\ \bibnamefont
  {Branco}}, \bibinfo {author} {\bibfnamefont {P.~M.}\ \bibnamefont
  {Ferreira}}, \bibinfo {author} {\bibfnamefont {L.}~\bibnamefont {Lavoura}},
  \bibinfo {author} {\bibfnamefont {M.~N.}\ \bibnamefont {Rebelo}}, \bibinfo
  {author} {\bibfnamefont {M.}~\bibnamefont {Sher}}, \ and\ \bibinfo {author}
  {\bibfnamefont {J.~P.}\ \bibnamefont {Silva}},\ }\href {\doibase
  10.1016/j.physrep.2012.02.002} {\bibfield  {journal} {\bibinfo  {journal}
  {Phys. Rept.}\ }\textbf {\bibinfo {volume} {516}},\ \bibinfo {pages} {1}
  (\bibinfo {year} {2012})},\ \Eprint {http://arxiv.org/abs/1106.0034}
  {arXiv:1106.0034 [hep-ph]} \BibitemShut {NoStop}%
\bibitem [{\citenamefont {Bell}\ and\ \citenamefont {Jackiw}(1969)}]{Bell1969}%
  \BibitemOpen
  \bibfield  {author} {\bibinfo {author} {\bibfnamefont {J.~S.}\ \bibnamefont
  {Bell}}\ and\ \bibinfo {author} {\bibfnamefont {R.}~\bibnamefont {Jackiw}},\
  }\href {\doibase 10.1007/BF02823296} {\bibfield  {journal} {\bibinfo
  {journal} {Il Nuovo Cimento A (1965-1970)}\ }\textbf {\bibinfo {volume}
  {60}},\ \bibinfo {pages} {47} (\bibinfo {year} {1969})}\BibitemShut {NoStop}%
\bibitem [{\citenamefont {Adler}(1969)}]{Adler1969a}%
  \BibitemOpen
  \bibfield  {author} {\bibinfo {author} {\bibfnamefont {S.~L.}\ \bibnamefont
  {Adler}},\ }\href {\doibase 10.1103/PhysRev.177.2426} {\bibfield  {journal}
  {\bibinfo  {journal} {Phys. Rev.}\ }\textbf {\bibinfo {volume} {177}},\
  \bibinfo {pages} {2426} (\bibinfo {year} {1969})}\BibitemShut {NoStop}%
\bibitem [{\citenamefont {Bardeen}(1969)}]{Bardeen1969}%
  \BibitemOpen
  \bibfield  {author} {\bibinfo {author} {\bibfnamefont {W.~A.}\ \bibnamefont
  {Bardeen}},\ }\href {\doibase 10.1103/PhysRev.184.1848} {\bibfield  {journal}
  {\bibinfo  {journal} {Phys. Rev.}\ }\textbf {\bibinfo {volume} {184}},\
  \bibinfo {pages} {1848} (\bibinfo {year} {1969})}\BibitemShut {NoStop}%
\bibitem [{\citenamefont {Adler}\ and\ \citenamefont
  {Bardeen}(1969)}]{Adler1969}%
  \BibitemOpen
  \bibfield  {author} {\bibinfo {author} {\bibfnamefont {S.~L.}\ \bibnamefont
  {Adler}}\ and\ \bibinfo {author} {\bibfnamefont {W.~A.}\ \bibnamefont
  {Bardeen}},\ }\href {\doibase 10.1103/PhysRev.182.1517} {\bibfield  {journal}
  {\bibinfo  {journal} {Phys. Rev.}\ }\textbf {\bibinfo {volume} {182}},\
  \bibinfo {pages} {1517} (\bibinfo {year} {1969})}\BibitemShut {NoStop}%
\bibitem [{\citenamefont {Gross}\ and\ \citenamefont
  {Jackiw}(1972)}]{Gross1969}%
  \BibitemOpen
  \bibfield  {author} {\bibinfo {author} {\bibfnamefont {D.~J.}\ \bibnamefont
  {Gross}}\ and\ \bibinfo {author} {\bibfnamefont {R.}~\bibnamefont {Jackiw}},\
  }\href {\doibase 10.1103/PhysRevD.6.477} {\bibfield  {journal} {\bibinfo
  {journal} {Phys. Rev. D}\ }\textbf {\bibinfo {volume} {6}},\ \bibinfo {pages}
  {477} (\bibinfo {year} {1972})}\BibitemShut {NoStop}%
\bibitem [{\citenamefont {Hilbert}(1902)}]{Hilbert1902}%
  \BibitemOpen
  \bibfield  {author} {\bibinfo {author} {\bibfnamefont {D.}~\bibnamefont
  {Hilbert}},\ }\href {\doibase 10.1090/S0002-9904-1902-00923-3} {\bibfield
  {journal} {\bibinfo  {journal} {Bull. Amer. Math. Soc.}\ }\textbf {\bibinfo
  {volume} {8}},\ \bibinfo {pages} {437} (\bibinfo {year} {1902})}\BibitemShut
  {NoStop}%
\bibitem [{\citenamefont {Matiyasevich}()}]{Matiyasevich1973}%
  \BibitemOpen
  \bibfield  {author} {\bibinfo {author} {\bibfnamefont {Y.}~\bibnamefont
  {Matiyasevich}},\ }\href@noop {} {\enquote {\bibinfo {title} {{Enumerable
  sets are diophantine (Russian)}},}\ }\bibinfo {howpublished} {{Dokl. Akad.
  Nauk SSSR \textbf{191} (1970)--English translation: Soviet Math. Dokl.
  \textbf{11} (1970)}}\BibitemShut {NoStop}%
\bibitem [{\citenamefont {{L. Ib\'{a}\~{n}ez and G. G.
  Ross}}(1994)}]{IBANEZ1994}%
  \BibitemOpen
  \bibfield  {author} {\bibinfo {author} {\bibnamefont {{L. Ib\'{a}\~{n}ez and
  G. G. Ross}}},\ }\href {\doibase 10.1016/0370-2693(94)90865-6} {\bibfield
  {journal} {\bibinfo  {journal} {Phys. Lett. B}\ }\textbf {\bibinfo {volume}
  {332}},\ \bibinfo {pages} {100 } (\bibinfo {year} {1994})}\BibitemShut
  {NoStop}%
\bibitem [{\citenamefont {{P. Bin\'{e}truy and P.
  Ramond}}(1995)}]{BINETRUY1995}%
  \BibitemOpen
  \bibfield  {author} {\bibinfo {author} {\bibnamefont {{P. Bin\'{e}truy and P.
  Ramond}}},\ }\href {\doibase 10.1016/0370-2693(95)00297-X} {\bibfield
  {journal} {\bibinfo  {journal} {Phys. Lett. B}\ }\textbf {\bibinfo {volume}
  {350}},\ \bibinfo {pages} {49 } (\bibinfo {year} {1995})}\BibitemShut
  {NoStop}%
\bibitem [{\citenamefont {Green}\ and\ \citenamefont
  {Schwarz}(1984)}]{Green1984}%
  \BibitemOpen
  \bibfield  {author} {\bibinfo {author} {\bibfnamefont {M.~B.}\ \bibnamefont
  {Green}}\ and\ \bibinfo {author} {\bibfnamefont {J.~H.}\ \bibnamefont
  {Schwarz}},\ }\href {\doibase 10.1016/0370-2693(84)91565-X} {\bibfield
  {journal} {\bibinfo  {journal} {Phys. Lett. B}\ }\textbf {\bibinfo {volume}
  {149}},\ \bibinfo {pages} {117 } (\bibinfo {year} {1984})}\BibitemShut
  {NoStop}%
\bibitem [{\citenamefont {Wolfenstein}(1983)}]{Wolfenstein1983}%
  \BibitemOpen
  \bibfield  {author} {\bibinfo {author} {\bibfnamefont {L.}~\bibnamefont
  {Wolfenstein}},\ }\href {\doibase 10.1103/PhysRevLett.51.1945} {\bibfield
  {journal} {\bibinfo  {journal} {Phys. Rev. Lett.}\ }\textbf {\bibinfo
  {volume} {51}},\ \bibinfo {pages} {1945} (\bibinfo {year}
  {1983})}\BibitemShut {NoStop}%
\bibitem [{\citenamefont {Glashow}\ and\ \citenamefont
  {Weinberg}(1977)}]{Glashow1977}%
  \BibitemOpen
  \bibfield  {author} {\bibinfo {author} {\bibfnamefont {S.~L.}\ \bibnamefont
  {Glashow}}\ and\ \bibinfo {author} {\bibfnamefont {S.}~\bibnamefont
  {Weinberg}},\ }\href {\doibase 10.1103/PhysRevD.15.1958} {\bibfield
  {journal} {\bibinfo  {journal} {Phys. Rev. D}\ }\textbf {\bibinfo {volume}
  {15}},\ \bibinfo {pages} {1958} (\bibinfo {year} {1977})}\BibitemShut
  {NoStop}%
\bibitem [{\citenamefont {Appelquist}\ \emph {et~al.}(2003)\citenamefont
  {Appelquist}, \citenamefont {Dobrescu},\ and\ \citenamefont
  {Hopper}}]{Appelquist2003}%
  \BibitemOpen
  \bibfield  {author} {\bibinfo {author} {\bibfnamefont {T.}~\bibnamefont
  {Appelquist}}, \bibinfo {author} {\bibfnamefont {B.~A.}\ \bibnamefont
  {Dobrescu}}, \ and\ \bibinfo {author} {\bibfnamefont {A.~R.}\ \bibnamefont
  {Hopper}},\ }\href {\doibase 10.1103/PhysRevD.68.035012} {\bibfield
  {journal} {\bibinfo  {journal} {Phys. Rev. D}\ }\textbf {\bibinfo {volume}
  {68}},\ \bibinfo {pages} {035012} (\bibinfo {year} {2003})}\BibitemShut
  {NoStop}%
\bibitem [{\citenamefont {Dreiner}\ \emph {et~al.}(2010)\citenamefont
  {Dreiner}, \citenamefont {Haber},\ and\ \citenamefont
  {Martin}}]{Dreiner2010}%
  \BibitemOpen
  \bibfield  {author} {\bibinfo {author} {\bibfnamefont {H.~K.}\ \bibnamefont
  {Dreiner}}, \bibinfo {author} {\bibfnamefont {H.~E.}\ \bibnamefont {Haber}},
  \ and\ \bibinfo {author} {\bibfnamefont {S.~P.}\ \bibnamefont {Martin}},\
  }\href {\doibase 10.1016/j.physrep.2010.05.002} {\bibfield  {journal}
  {\bibinfo  {journal} {Phys. Rep.}\ }\textbf {\bibinfo {volume} {494}},\
  \bibinfo {pages} {1 } (\bibinfo {year} {2010})}\BibitemShut {NoStop}%
\bibitem [{\citenamefont {Tellander}(2018)}]{Tellander2018}%
  \BibitemOpen
  \bibfield  {author} {\bibinfo {author} {\bibfnamefont {F.}~\bibnamefont
  {Tellander}},\ }\href@noop {} {\enquote {\bibinfo {title} {{Anomaly-free
  Froggatt-Nielsen extensions of the Standard Model with two Higgs
  doublets}},}\ }\bibinfo {howpublished}
  {\url{http://lup.lub.lu.se/student-papers/record/8956710}} (\bibinfo {year}
  {2018})\BibitemShut {NoStop}%
\bibitem [{\citenamefont {{The Sage Developers}}(2017)}]{sagemath}%
  \BibitemOpen
  \bibfield  {author} {\bibinfo {author} {\bibnamefont {{The Sage
  Developers}}},\ }\href@noop {} {\enquote {\bibinfo {title} {{SageMath} {8.1}
  the {S}age {M}athematics {S}oftware {S}ystem},}\ }\bibinfo {howpublished}
  {\url{http://www.sagemath.org}} (\bibinfo {year} {2017})\BibitemShut
  {NoStop}%
\bibitem [{\citenamefont {Grayson}\ and\ \citenamefont {Stillman}()}]{M2}%
  \BibitemOpen
  \bibfield  {author} {\bibinfo {author} {\bibfnamefont {D.~R.}\ \bibnamefont
  {Grayson}}\ and\ \bibinfo {author} {\bibfnamefont {M.~E.}\ \bibnamefont
  {Stillman}},\ }\href@noop {} {\enquote {\bibinfo {title} {Macaulay2, a
  software system for research in algebraic geometry},}\ }\bibinfo
  {howpublished} {Available at
  \url{http://www.math.uiuc.edu/Macaulay2/}}\BibitemShut {NoStop}%
\bibitem [{map()}]{maple}%
  \BibitemOpen
  \href@noop {} {\emph {\bibinfo {title} {{Maple User Manual}}}}\ (\bibinfo
  {publisher} {{Toronto: Maplesoft, a division of Waterloo Maple Inc.,
  2005-2015}})\BibitemShut {NoStop}%
\bibitem [{\citenamefont {{Wolfram Research, Inc. }}()}]{mathematica}%
  \BibitemOpen
  \bibfield  {author} {\bibinfo {author} {\bibnamefont {{Wolfram Research, Inc.
  }}},\ }\href@noop {} {\enquote {\bibinfo {title} {{Mathematica, Version
  11.3}},}\ }\bibinfo {note} {{Champaign, IL, 2018}}\BibitemShut {NoStop}%
\bibitem [{\citenamefont {Fulton}\ and\ \citenamefont
  {Weiss}(1969)}]{Fulton1969}%
  \BibitemOpen
  \bibfield  {author} {\bibinfo {author} {\bibfnamefont {W.}~\bibnamefont
  {Fulton}}\ and\ \bibinfo {author} {\bibfnamefont {R.}~\bibnamefont {Weiss}},\
  }\href@noop {} {\emph {\bibinfo {title} {Algebraic curves : an introduction
  to algebraic geometry.}}},\ Mathematics lecture note series\ (\bibinfo
  {publisher} {Menlo Park, Calif. : Benjamin/Cummings, cop.},\ \bibinfo {year}
  {1969})\BibitemShut {NoStop}%
\bibitem [{\citenamefont {Mordell}(1922)}]{Mordell1922}%
  \BibitemOpen
  \bibfield  {author} {\bibinfo {author} {\bibfnamefont {L.}~\bibnamefont
  {Mordell}},\ }\href@noop {} {\bibfield  {journal} {\bibinfo  {journal} {Proc.
  Camb. Phil. Soc.}\ }\textbf {\bibinfo {volume} {21}},\ \bibinfo {pages} {179}
  (\bibinfo {year} {1922})}\BibitemShut {NoStop}%
\bibitem [{\citenamefont {Weil}(1929)}]{Weil1929}%
  \BibitemOpen
  \bibfield  {author} {\bibinfo {author} {\bibfnamefont {A.}~\bibnamefont
  {Weil}},\ }\href@noop {} {\bibfield  {journal} {\bibinfo  {journal} {Acta
  Math.}\ }\textbf {\bibinfo {volume} {52}},\ \bibinfo {pages} {281} (\bibinfo
  {year} {1929})}\BibitemShut {NoStop}%
\bibitem [{\citenamefont {Faltings}(1983)}]{Faltings1983}%
  \BibitemOpen
  \bibfield  {author} {\bibinfo {author} {\bibfnamefont {G.}~\bibnamefont
  {Faltings}},\ }\href {\doibase 10.1007/BF01388432} {\bibfield  {journal}
  {\bibinfo  {journal} {Invent. Math.}\ }\textbf {\bibinfo {volume} {73}},\
  \bibinfo {pages} {349} (\bibinfo {year} {1983})}\BibitemShut {NoStop}%
\bibitem [{\citenamefont {Poulakis}\ and\ \citenamefont
  {Voskos}(2000)}]{Poulakis2000}%
  \BibitemOpen
  \bibfield  {author} {\bibinfo {author} {\bibfnamefont {D.}~\bibnamefont
  {Poulakis}}\ and\ \bibinfo {author} {\bibfnamefont {E.}~\bibnamefont
  {Voskos}},\ }\href {\doibase 10.1006/jsco.2000.0420} {\bibfield  {journal}
  {\bibinfo  {journal} {J. Sym. Comp.}\ }\textbf {\bibinfo {volume} {30}},\
  \bibinfo {pages} {573 } (\bibinfo {year} {2000})}\BibitemShut {NoStop}%
\bibitem [{\citenamefont {Poulakis}\ and\ \citenamefont
  {Voskos}(2002)}]{Poulakis2002}%
  \BibitemOpen
  \bibfield  {author} {\bibinfo {author} {\bibfnamefont {D.}~\bibnamefont
  {Poulakis}}\ and\ \bibinfo {author} {\bibfnamefont {E.}~\bibnamefont
  {Voskos}},\ }\href {\doibase 10.1006/jsco.2001.0515} {\bibfield  {journal}
  {\bibinfo  {journal} {J. Sym. Comp.}\ }\textbf {\bibinfo {volume} {33}},\
  \bibinfo {pages} {479 } (\bibinfo {year} {2002})}\BibitemShut {NoStop}%
\bibitem [{\citenamefont {Stoll}(2011)}]{Stoll2011}%
  \BibitemOpen
  \bibfield  {author} {\bibinfo {author} {\bibfnamefont {M.}~\bibnamefont
  {Stoll}},\ }\href {\doibase 10.5802/jtnb.760} {\bibfield  {journal} {\bibinfo
   {journal} {JTNB}\ }\textbf {\bibinfo {volume} {23}},\ \bibinfo {pages} {257}
  (\bibinfo {year} {2011})}\BibitemShut {NoStop}%
\bibitem [{\citenamefont {Weinberg}(1996)}]{Weinberg1996}%
  \BibitemOpen
  \bibfield  {author} {\bibinfo {author} {\bibfnamefont {S.}~\bibnamefont
  {Weinberg}},\ }\href {\doibase 10.1017/CBO9781139644174} {\emph {\bibinfo
  {title} {{The Quantum Theory of Fields}}}},\ Vol.~\bibinfo {volume} {2}\
  (\bibinfo  {publisher} {Cambridge University Press},\ \bibinfo {year}
  {1996})\BibitemShut {NoStop}%
\bibitem [{\citenamefont {Aitchison}(2007)}]{Aitchison2007}%
  \BibitemOpen
  \bibfield  {author} {\bibinfo {author} {\bibfnamefont {I.~J.~R.}\
  \bibnamefont {Aitchison}},\ }\href@noop {} {\emph {\bibinfo {title}
  {{Supersymmetry in particle physics: an elementary introduction.}}}}\
  (\bibinfo  {publisher} {{Cambridge University Press}},\ \bibinfo {year}
  {2007})\BibitemShut {NoStop}%
\bibitem [{\citenamefont {Dreiner}\ \emph {et~al.}(2005)\citenamefont
  {Dreiner}, \citenamefont {Murayama},\ and\ \citenamefont
  {Thormeier}}]{Dreiner2005}%
  \BibitemOpen
  \bibfield  {author} {\bibinfo {author} {\bibfnamefont {H.~K.}\ \bibnamefont
  {Dreiner}}, \bibinfo {author} {\bibfnamefont {H.}~\bibnamefont {Murayama}}, \
  and\ \bibinfo {author} {\bibfnamefont {M.}~\bibnamefont {Thormeier}},\ }\href
  {\doibase 10.1016/j.nuclphysb.2005.08.047} {\bibfield  {journal} {\bibinfo
  {journal} {Nucl. Phys. B}\ }\textbf {\bibinfo {volume} {729}},\ \bibinfo
  {pages} {278 } (\bibinfo {year} {2005})}\BibitemShut {NoStop}%
\bibitem [{\citenamefont {Maekawa}(2001)}]{Maekawa2001}%
  \BibitemOpen
  \bibfield  {author} {\bibinfo {author} {\bibfnamefont {N.}~\bibnamefont
  {Maekawa}},\ }\href {\doibase 10.1143/PTP.106.401} {\bibfield  {journal}
  {\bibinfo  {journal} {Prog. Theor. Phys.}\ }\textbf {\bibinfo {volume}
  {106}},\ \bibinfo {pages} {401} (\bibinfo {year} {2001})}\BibitemShut
  {NoStop}%
\bibitem [{\citenamefont {Chankowski}\ \emph {et~al.}(2005)\citenamefont
  {Chankowski}, \citenamefont {Kowalska}, \citenamefont {Lavignac},\ and\
  \citenamefont {Pokorski}}]{Chankowski2005}%
  \BibitemOpen
  \bibfield  {author} {\bibinfo {author} {\bibfnamefont {P.~H.}\ \bibnamefont
  {Chankowski}}, \bibinfo {author} {\bibfnamefont {K.}~\bibnamefont
  {Kowalska}}, \bibinfo {author} {\bibfnamefont {S.}~\bibnamefont {Lavignac}},
  \ and\ \bibinfo {author} {\bibfnamefont {S.}~\bibnamefont {Pokorski}},\
  }\href {\doibase 10.1103/PhysRevD.71.055004} {\bibfield  {journal} {\bibinfo
  {journal} {Phys. Rev. D}\ }\textbf {\bibinfo {volume} {71}},\ \bibinfo
  {pages} {055004} (\bibinfo {year} {2005})}\BibitemShut {NoStop}%
\bibitem [{\citenamefont {Kane}\ \emph {et~al.}(2005)\citenamefont {Kane},
  \citenamefont {King}, \citenamefont {Peddie},\ and\ \citenamefont
  {Velasco-Sevilla}}]{Kane2005}%
  \BibitemOpen
  \bibfield  {author} {\bibinfo {author} {\bibfnamefont {G.~L.}\ \bibnamefont
  {Kane}}, \bibinfo {author} {\bibfnamefont {S.~F.}\ \bibnamefont {King}},
  \bibinfo {author} {\bibfnamefont {I.~N.}\ \bibnamefont {Peddie}}, \ and\
  \bibinfo {author} {\bibfnamefont {L.}~\bibnamefont {Velasco-Sevilla}},\
  }\href {\doibase 10.1088/1126-6708/2005/08/083} {\bibfield  {journal}
  {\bibinfo  {journal} {JHEP}\ }\textbf {\bibinfo {volume} {08}},\ \bibinfo
  {pages} {083} (\bibinfo {year} {2005})}\BibitemShut {NoStop}%
\bibitem [{\citenamefont {Giudice}\ and\ \citenamefont
  {Masiero}(1988)}]{Giudice1988}%
  \BibitemOpen
  \bibfield  {author} {\bibinfo {author} {\bibfnamefont {G.}~\bibnamefont
  {Giudice}}\ and\ \bibinfo {author} {\bibfnamefont {A.}~\bibnamefont
  {Masiero}},\ }\href {\doibase 10.1016/0370-2693(88)91613-9} {\bibfield
  {journal} {\bibinfo  {journal} {Phys. Lett. B}\ }\textbf {\bibinfo {volume}
  {206}},\ \bibinfo {pages} {480 } (\bibinfo {year} {1988})}\BibitemShut
  {NoStop}%
\bibitem [{\citenamefont {Weinberg}(1978)}]{Weinberg1978}%
  \BibitemOpen
  \bibfield  {author} {\bibinfo {author} {\bibfnamefont {S.}~\bibnamefont
  {Weinberg}},\ }\href {\doibase 10.1103/PhysRevLett.40.223} {\bibfield
  {journal} {\bibinfo  {journal} {Phys. Rev. Lett.}\ }\textbf {\bibinfo
  {volume} {40}},\ \bibinfo {pages} {223} (\bibinfo {year} {1978})}\BibitemShut
  {NoStop}%
\bibitem [{\citenamefont {Wilczek}(1978)}]{Wilczek1978}%
  \BibitemOpen
  \bibfield  {author} {\bibinfo {author} {\bibfnamefont {F.}~\bibnamefont
  {Wilczek}},\ }\href {\doibase 10.1103/PhysRevLett.40.279} {\bibfield
  {journal} {\bibinfo  {journal} {Phys. Rev. Lett.}\ }\textbf {\bibinfo
  {volume} {40}},\ \bibinfo {pages} {279} (\bibinfo {year} {1978})}\BibitemShut
  {NoStop}%
\bibitem [{\citenamefont {Tellander}\ and\ \citenamefont
  {Rathsman}()}]{Tellander2018b}%
  \BibitemOpen
  \bibfield  {author} {\bibinfo {author} {\bibfnamefont {F.}~\bibnamefont
  {Tellander}}\ and\ \bibinfo {author} {\bibfnamefont {J.}~\bibnamefont
  {Rathsman}},\ }\href@noop {} {}\bibinfo {howpublished} {in
  preparation}\BibitemShut {NoStop}%
\end{thebibliography}%

\end{document}